\documentclass[a4paper,12pt]{amsproc}
\usepackage{amsmath,amsfonts,amssymb,amsthm,anysize}
\usepackage{enumerate}
\usepackage{epsfig}
\usepackage{supertabular}
\usepackage{graphicx}
\usepackage{color}

\newcommand{\Cay}[2]{{\rm Cay}(#1,#2)}
\newcommand{\F}{\mathbb{F}_2}

\newcommand{\lla}{\langle}
\newcommand{\rr}{\rangle}

\newcommand{\bb}{\mathbb}


\usepackage{indentfirst}

\newtheorem{theorem}{Theorem}
\newtheorem{corollary}[theorem]{Corollary}
\newtheorem{lemma}[theorem]{Lemma}
\newtheorem{proposition}[theorem]{Proposition}
\newtheorem{definition}{Definition}
\newtheorem{example}{Example}[section] 
\newtheorem{remark}{Remark}[section]


\usepackage{url, hyperref}
\hypersetup{citecolor=blue, linkcolor=blue, colorlinks=true}

\begin{document}
%
\title{The BCH Family of Storage Codes on Triangle-Free Graphs is of Unit Rate}
%
%
%

\author{Haihua~Deng}
\address{ %
	Department of Mathematics\\
	Southern University of Science and Technology\\
	Shenzhen 518055, China}
\email{12131225@mail.sustech.edu.cn}

\author{Hexiang Huang}
\address{ %
	Department of Mathematics\\
	Southern University of Science and Technology\\
	Shenzhen 518055, China}
\email{hhxiang1999@foxmail.com}

\author{Guobiao Weng}
\address{ %
	School of Mathematical Sciences\\
	Dalian University of Technology\\
	Dalian 116024, China}
\email{weng@dlut.edu.cn}

\author{Qing Xiang$^*$}
\address{ %
	Department of Mathematics\\
	Southern University of Science and Technology\\
	Shenzhen 518055, China
}
\email{xiangq@sustech.edu.cn}
\thanks{$^*$Research partially supported by the National Natural Science Foundation of China Grant No. 12071206, 12131011, 12150710510, and the Sino-German Mobility Programme M-0157.}

\keywords{Cayley graphs, storage codes, the BCH family, the polynomial method}

\begin{abstract}
Let $\Gamma$ be a simple connected graph on $n$ vertices, and let $C$ be a code of length $n$ whose coordinates are indexed by the vertices of $\Gamma$. We say that $C$ is a \textit{storage code} on $\Gamma$ if for any codeword $c \in C$, one can recover the information on each coordinate of $c$ by accessing its neighbors in $\Gamma$. The main problem here is to construct high-rate storage codes on triangle-free graphs. In this paper, we solve an open problem posed by Barg and Z\'emor in 2022, showing that the BCH family of storage codes is of unit rate. Furthermore, we generalize the construction of the BCH family and obtain more storage codes of unit rate on triangle-free graphs. 
\end{abstract}

\maketitle

\vspace{-0.5cm}
\setcounter{tocdepth}{2}
\tableofcontents

%

\section{Introduction}

A simple graph, also called a strict graph, is an unweighted, undirected graph containing no loops or multiple edges. A simple graph is said to be {\it connected} if there is a path between each pair of its vertices.

Let $\Gamma$ be a simple connected graph on $n$ vertices, and let $C$ be a code of length $n$ whose coordinates are indexed by the vertices of $\Gamma$. We say that $C$ is a \textit{storage code} on $\Gamma$ if for any codeword $c \in C$, one can recover the information on each coordinate of $c$ by accessing its neighbors in $\Gamma$. In 2014,  Mazumdar \cite{Mazumdar2014StorageCO, Mazumdar2017StorageCA}, Shanmugam and Dimakis \cite{Shanmugam2014BoundingMU} introduced storage codes on graphs independently. The concept of storage codes on graphs was introduced, in a different way, by the authors of \cite{cameron2014guessing} and \cite{chris2011guessing}. Throughout this paper, we will only consider binary linear storage codes.

Suppose that $\Gamma$ has $n$ vertices, say $v_1,v_2,\ldots,v_n$. We define a storage code on $\Gamma$ in the following way: let $A(\Gamma)$ be the adjacency matrix of $\Gamma$ whose rows and columns are indexed by the vertices $v_1,v_2,\ldots,v_n$. Let $  H := A(\Gamma)+  I$ where $I$ is the $n\times n$ identity matrix, and let $C$ be the linear code over $\F$ with $H$ as a parity-check matrix. Given a codeword $c=(c_{v_1},c_{v_2},\ldots,c_{v_n})\in C$, we are able to recover any $v_i^{\rm th}$ entry of $c$ by accessing its neighbors since the $v_i^{\text{th}}$ row of $  H$ implies a linear equation, namely, $c_{v_i}=\sum_{v_j \in N(v_i)}c_{v_j}$, where $N(v_i)$ is the set of neighbors of $v_i$ in $\Gamma$. The rate of a linear storage code $C$, denoted by $R(C)$, is simply the ratio of its dimension to the dimension of the ambient space. If we have a family of storage codes $\{C_m\}$, where $m$ is a parameter, assuming that $\lim\limits_{m\to \infty}R(C_m)$ exists, then this limit is called the {\it rate} of the family. 

Constructing a family of storage codes of unit rate is easy: let $\Gamma_n$ be the  complete graph on $n$ vertices, and let $C_n$ be the binary linear code defined by the equation $\sum_{i=1}^nx_{v_i}=0$. Then $C_n$ is a storage code on $\Gamma$ with rate $1-1/n$; hence the family $\{C_n\}$ is of unit rate.

In the above example, the graph used to obtain the storage code of rate close to one is very dense (in fact, as dense as possible), and contains a large number of cliques. It is therefore natural to consider the question of the largest attainable rate of storage codes on graphs that contain no cliques $K_t \; (t\geq 3)$, i.e., triangle-free graphs.

Constructing storage codes of high rate on such graphs represents a challenge. A triangle-free graph with many edges does not necessarily give rise to a storage code of  high rate. To see this, consider the complete bipartite graph $K_{t,t}$ which is triangle-free and quite dense, and a storage code $C$ on it. Note that there are two independent vertex sets of $K_{t,t}$ and so for each vertex, we can recover the message on it from the ones in the other (vertex) independent set; hence $R(C)\leq 1/2$. In early studies \cite{chris2011guessing}, the authors had conjectured that for triangle-free graphs, $R = 1/2$ is the largest attainable rate value. Later on this conjecture was refuted in \cite{cameron2014guessing} by some sporadic examples.

Recently, the authors of \cite{barg2022high} constructed four infinite families of storage codes on triangle-free graphs. They used the \textit{Cayley graph method}: Let $S$ be a subset of $\F^r$ such that $0\notin S$ and the sum of any three distinct vectors in $S$ is nonzero. Then the resulting Cayley graph $\Gamma = \Cay{\F ^r}{S}$ is triangle-free. Let $ H := A(\Gamma) + I$ and $C$ be the binary linear code defined by using $ H$ as its parity-check matrix. Then we obtain a storage code $C$ on the triangle-free graph $\Gamma$. Using this method, a proper subset $S\subseteq \F^r$ will give rise to a triangle-free graph and a storage code on it. In their work, the Hamming family is of rate $3/4$ and the BCH family shows the record of rate $0.8196$. It remains unknown whether the BCH family can approach unit rate or not; this was left as an open problem in \cite{barg2022high}.

Subsequently, the authors of \cite{bargmsch} and the authors of \cite{xiang2023unitrate} presented the generalized Hamming family which could reach unit rate. In this paper, we solve the open problem about the BCH family, showing that the BCH family is indeed of unit rate. We also generalize the construction of the BCH family to obtain more storage codes on triangle-free graphs with rates approaching one. 

\section{The BCH family is of unit rate}

\subsection{An upper bound}
\label{Upperbound_section}

In 2022, Barg and Z\'emor~\cite{barg2022high} presented a new family, the so-called BCH family, which can reach the rate of $0.8196$. This data can be calculated by using a computer. It is left as an open problem whether the BCH family can reach unit rate or not. In order to understand why the BCH family exhibits a phenomenon of high rate, we use the \textit{polynomial method} to investigate the intrinsic algebraic structure of the BCH family. As a consequence, we give an upper bound for the rank of the parity-check matrix of the BCH family, which shows that the BCH family is indeed of unit rate. 

The BCH family of storage codes is constructed by using  the Cayley graph method. We first recall the definition of Cayley graphs.

\begin{definition}
 Let $G$ be a finite multiplicatively written group with identity element $e$, and let $S$ be a subset of $G$ such that $e\notin S$ and $S=S^{-1}$, where $S^{-1}=\left\{g^{-1}\ \vert\ g\in S\right\}$. The Cayley graph on $G$ with connection set $S$, denoted by $\Gamma=\Cay{G}{S}$,  is the graph with elements of $G$ as vertices, two vertices $g_1,g_2\in G$ are adjacent if and only if $g_1g_2^{-1}\in S$.
\end{definition}

Now we are going to construct the Cayley graphs of the BCH family. Let $q=2^m$ with $m\ge 1$ being an integer. The vertex set is given by $G=\mathbb{F}_q^2$ and the connection set is given by $S_m\backslash \{0\}$, where 
\begin{align*}
S_m:=\left\{(a,a^3)\mid a\in \bb F_q\right\}\subseteq \bb F_q^2.
\end{align*}  
The graph is $\Gamma=\Cay{\bb F_q^2}{S_m\backslash \{0\}}$.

Let $H_m:=A(\Gamma)+I$ and $C_m$ be the binary linear code defined by using $H_m$ as a parity-check matrix. Since each row of the parity-check matrix $H_m$ for the storage code $C_m$ can be regarded as a characteristic vector of a coset in $\{x+S_m\mid x\in \bb F_q^2 \}$, we may call $H_m$ the \textit{coset matrix} of $S_m$ in $\bb F_q^2$. In order to better understand the structure of the matrix $H_m$, we will express the $(x,y)$-entry of $H_m$ as the value of a polynomial evaluated at $(x,y)$. More precisely, the coset matrix $H_m$ over $\mathbb{F}_2$ can be formulated as 
\begin{align*}
H_m&=(a_{x,y})_{x,y\in \bb F_q^2},    
\end{align*}
where the $(x,y)$-entry is given by 
\begin{align*}
a_{x,y}=\begin{cases}
 1, & \text{if } x-y\in S_m, \\
 0, & \text{otherwise. } 
\end{cases}    
\end{align*}

Next, we apply the \textit{polynomial method} to investigate the rank of $ H_m$. If we write $x=(x_1,x_2),y=(y_1,y_2)\in \bb F_q^2$, then $a_{x,y}$ can be expressed as the value of a polynomial $g$ evaluated at $(x,y)$:
\begin{equation*}
    \begin{aligned}
        a_{x,y}&=\left((x_1-y_1)^3-(x_2-y_2)\right)^{q-1}+1 \\
        &=\left(x_1^3+x_1^2y_1+x_1y_1^2+y_1^3+x_2+y_2\right)^{q-1}+1\\
        &=:g(x_1,x_2,y_1,y_2).
    \end{aligned}
\end{equation*}

\noindent Let $W_m=(a_{x,y}+1)_{x,y\in\mathbb{F}_q^2}$. Then $  W_m=  H_m+  J$, where $ J$ is the all-one matrix, and so 
\begin{align*}
\mathrm{rank}(H_m)-\mathrm{rank}(J)\leq \mathrm{rank}(W_m)\leq \mathrm{rank}(H_m)+\mathrm{rank}(J),    
\end{align*} 
that is,
\begin{align*}
\mathrm{rank}(  H_m)-1\leq \mathrm{rank}(  W_m)\leq \mathrm{rank}(  H_m)+1.
\end{align*}

Therefore, the matrix $W_m$ has almost the same rank as that of $H_m$. We define the {\it rate} of a square matrix $ A_{n\times n}$ to be the ratio of $\mathrm{rank}( A)$ to the size $n$; that is, $R( A)=\mathrm{rank}( A)/n$. That the BCH family is of unit rate is equivalent to saying that the rate of $W_m$ converges to $0$ as $m\rightarrow\infty$. Now the problem is reduced to computing the rank of $W_m$ whose entry $a_{x,y}+1$ is given by 
\begin{align*}
h(x_1,x_2,y_1,y_2)=(x_1^3+x_1^2y_1+x_1y_1^2+y_1^3+x_2+y_2)^{q-1}.    
\end{align*}

The following proposition simplifies the question further by dropping some terms from the polynomial $h$.

\begin{proposition}
    Let $  D_m=\left(f(x,y)\right)_{x,y\in \bb F_q^2}$, where 
    \[f(x_1,x_2,y_1,y_2)=(x_1^2y_1+x_1y_1^2+x_2+y_2)^{q-1}.\]
    Then $D_m$ has the same $\F$-rank as that of $W_m$.
\end{proposition}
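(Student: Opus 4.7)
The plan is to exhibit a simultaneous row/column permutation that transforms $W_m$ into $D_m$, which immediately gives equality of $\mathbb{F}_2$-ranks. The key observation is that the polynomial inside the $(q-1)$-th power defining $h$ differs from the one defining $f$ only by the pure cube terms $x_1^3 + y_1^3$, and in characteristic $2$ these can be absorbed into $x_2$ and $y_2$ by a linear (in fact affine) bijection of $\mathbb{F}_q^2$.

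Concretely, I would define the map
\[
\sigma \colon \mathbb{F}_q^2 \longrightarrow \mathbb{F}_q^2, \qquad \sigma(u_1,u_2) = (u_1,\, u_2 + u_1^3).
\]
Since $\sigma$ is its own inverse over $\mathbb{F}_2$, it is a bijection of $\mathbb{F}_q^2$. I would then substitute $(x_1,x_2+x_1^3,y_1,y_2+y_1^3)$ into $h$ and verify directly that the cubes cancel:
\[
x_1^3 + x_1^2 y_1 + x_1 y_1^2 + y_1^3 + (x_2 + x_1^3) + (y_2 + y_1^3) \;=\; x_1^2 y_1 + x_1 y_1^2 + x_2 + y_2,
\]
so that $h(\sigma(x),\sigma(y)) = f(x,y)$ for all $x,y \in \mathbb{F}_q^2$.

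From this identity, if $P$ denotes the permutation matrix of $\sigma$ acting on the index set $\mathbb{F}_q^2$, then $D_m = P\, W_m\, P^\top$. Since multiplication by invertible (in particular permutation) matrices preserves rank, $\mathrm{rank}_{\mathbb{F}_2}(D_m) = \mathrm{rank}_{\mathbb{F}_2}(W_m)$, which is exactly the claim.

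There is no real obstacle here; the entire argument is a change of variables, made possible by the characteristic-$2$ identity that lets us cancel $x_1^3+y_1^3$ by shifting $x_2,y_2$. The only thing to be careful about is to check that the substitution is genuinely a bijection on the full index set $\mathbb{F}_q^2$ (it is, since fixing the first coordinate and translating the second by a function of the first is invertible), so that the transformation acts by a single permutation simultaneously on rows and columns.
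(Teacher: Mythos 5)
Your argument is correct and is essentially the paper's own proof: both apply the substitution $(x_1,x_2,y_1,y_2)\mapsto(x_1,x_2+x_1^3,y_1,y_2+y_1^3)$, observe that in characteristic $2$ the cubes cancel so that $h$ becomes $f$, and conclude that $D_m$ and $W_m$ differ by a simultaneous row and column permutation, hence have equal $\F$-rank. Your version merely spells out the permutation matrix $P$ and the identity $D_m=PW_mP^{\top}$ a bit more explicitly.
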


\begin{proof}
Note that changing $(x_1,x_2)$ to $(x_1,x_2+x_1^3)$ is a permutation on $\bb F_q^2$. Thus changing $h(x_1,x_2,y_1,y_2)$ to $f=h(x_1,x_2+x_1^3,y_1,y_2+y_1^3)$ is in fact a permutation on the rows and columns of $W_m$. The conclusion of the proposition now follows.
\end{proof}

To find an upper bound on the rank of $D_m$, we first decompose $D_m$ as the product of two matrices. Let $$\Omega:=\left\{(l_1,l_2,l_3,l_4)\mid \sum_{i=1}^4l_i=q-1,0\leq l_i\leq q-1,\forall i\right\}.$$ Then we can expand the polynomial $f$ as follows:
\begin{align*}
    f&=(x_1^2y_1+x_1y_1^2+x_2+y_2)^{q-1} \\
    &=\sum_{(l_1,l_2,l_3,l_4)\in \Omega}{q-1\choose l_1,l_2,l_3,l_4}x_1^{2l_1+l_2}x_2^{l_3}y_1^{l_1+2l_2}y_2^{l_4} \\
    &=\left[\begin{array}{ccc}
         \cdots&{q-1\choose l_1,l_2,l_3,l_4}x_1^{2l_1+l_2}x_2^{l_3}&\cdots   
    \end{array}\right]
    \left[\begin{array}{c}
         \vdots  \\
          y_1^{l_1+2l_2}y_2^{l_4} \\
          \vdots
    \end{array}\right],
\end{align*}
where the coordinates of the row/column vector are indexed by elements in $\Omega$. Therefore we can write $  D_m$ as the product of two matrices
\[  D_m={ L R}=\left[\begin{array}{ccc}
         \cdots&{q-1\choose l_1,l_2,l_3,l_4}x_1^{2l_1+l_2}x_2^{l_3}&\cdots   
    \end{array}\right]\left[\begin{array}{c}
         \vdots  \\
          y_1^{l_1+2l_2}y_2^{l_4} \\
          \vdots
    \end{array}\right],\]
where the rows of $ L$ and columns of $ R$ are indexed by elements of $\bb F_q^2$. Let $N_m$ be the number of distinct nonzero monomials in $L$. That is,
\begin{align*}
    N_m:&= \#\left\{(2l_1+l_2,l_3)\,\bigg|\, {q-1\choose l_1,l_2,l_3,l_4}\equiv 1\pmod{2}\right\}.
\end{align*}
We then have an upper bound on $\mathrm{rank}(  D_m)$:
\begin{equation}
\label{upper_bound_for_Dm}
    \mathrm{rank}(D_m)\leq {\rm rank}( L)\leq N_m,
\end{equation}

Using some counting techniques, we can give an explicit formula for $N_m$; and hence obtain an upper bound on the rank of $D_m$. The obtained upper bound is good enough for us to show that the BCH family is of unit rate. We state the following theorem whose proof is postponed to the next subsection.

\begin{theorem}
\label{upper_bound_theorem}
Let $D_m$ be defined as above with $m\ge 1$ being an integer. Then
\begin{align*}
\mathrm{rank}(D_m)\leq \frac{1+\sqrt{2}}{2}(2+\sqrt{2})^m,
\end{align*}
and so
\begin{align*}
R(D_m)\leq \frac{1+\sqrt{2}}{2}\left(\frac{2+\sqrt{2}}{4}\right)^{m}.
\end{align*}
\end{theorem}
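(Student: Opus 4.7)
The plan is to obtain a closed form for $N_m$ and use it to bound $\mathrm{rank}(D_m)$ via inequality \eqref{upper_bound_for_Dm}. First, I would use Lucas' theorem to decode the condition $\binom{q-1}{l_1,l_2,l_3,l_4} \equiv 1 \pmod 2$. Since every bit of $q - 1 = 2^m - 1$ equals $1$, the coefficient is odd precisely when the binary supports $A_i \subseteq \{0,\ldots,m-1\}$ of the $l_i$ form an ordered partition of $\{0, 1, \ldots, m-1\}$ into four (possibly empty) parts. Stratifying by $A_3$, writing $T := \{0,\ldots,m-1\} \setminus A_3$, and setting $c_j = 2$ if $j \in A_1$, $c_j = 1$ if $j \in A_2$, and $c_j = 0$ otherwise, the quantity $2l_1 + l_2$ becomes $\sum_{j \in T} c_j 2^j$ with $c_j \in \{0, 1, 2\}$, so the contribution of $A_3$ to $N_m$ equals the number of distinct such sums.

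Next, I would decompose $T$ into maximal runs of consecutive integers $B_1, \ldots, B_k$ of sizes $t_1, \ldots, t_k$. The contribution of a block $B_i = \{a_i, \ldots, a_i + t_i - 1\}$ occupies only bit positions $\{a_i, \ldots, a_i + t_i\}$, where the top bit arises as a possible carry when $c_{a_i + t_i - 1} = 2$. Since maximality forces $a_{i+1} \ge a_i + t_i + 1$, distinct blocks use disjoint bit-ranges and their contributions are independent. An induction on $t$ shows that a block of size $t$ realizes, after scaling by $2^{a_i}$, exactly the integers $\{0, 1, \ldots, 2^{t+1} - 2\}$ and so contributes $2^{t+1} - 1$ distinct values. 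Consequently
\[
N_m = \sum_{T \subseteq \{0,\ldots,m-1\}} \prod_{i=1}^{k(T)} \bigl(2^{t_i(T) + 1} - 1\bigr).
\]

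Finally, I would set up a recurrence by conditioning on the length $j \ge 0$ of the block of $T$ containing position $m - 1$ (with the convention that $j = 0$ means $m - 1 \notin T$). Setting $N_{-1} := 1$ and $f(j) := 2^{j+1} - 1$, this yields the convolution recurrence $N_m = \sum_{j=0}^{m} f(j)\, N_{m-j-1}$ for $m \ge 0$. Since $\sum_{j \ge 0} f(j)\, z^j = \frac{1}{(1-2z)(1-z)}$, a standard generating-function manipulation produces
\[
\sum_{m \ge 0} N_{m-1}\, z^m \;=\; \frac{(1-2z)(1-z)}{1 - 4z + 2z^2} \;=\; \frac{(1-2z)(1-z)}{(1-(2+\sqrt{2})z)(1-(2-\sqrt{2})z)},
\]
and a partial-fractions expansion yields the closed form $N_m = \frac{(2+\sqrt{2})^{m+1} - (2-\sqrt{2})^{m+1}}{2\sqrt{2}}$ for $m \ge 1$. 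Dropping the positive subtracted term gives $N_m \le \frac{1+\sqrt{2}}{2}(2+\sqrt{2})^m$ (and this also holds at $m = 0$); combined with \eqref{upper_bound_for_Dm} and $n = 4^m$, this produces both inequalities in the theorem.

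I expect the main obstacle to be the combinatorial step in the second paragraph: one must simultaneously verify that a size-$t$ block realizes every integer in $\{0, 1, \ldots, 2^{t+1} - 2\}$ and that the gap guaranteed by maximality between adjacent blocks suffices to prevent any interaction among their carries. Once this counting is in place, the Lucas translation and the generating-function solve are routine.
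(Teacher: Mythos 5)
Your proposal is correct and follows essentially the same route as the paper: the Lucas-theorem translation, the stratification by the support of $l_3$, the decomposition into maximal all-ones blocks with multiplicative contributions $2^{t+1}-1$ (the paper's Proposition~\ref{allone_case} and Lemma~\ref{divide_segment}), and the convolution recurrence $N_m=\sum_{j=0}^{m}(2^{j+1}-1)N_{m-j-1}$ are all identical to the paper's argument, and your closed form $N_m=\frac{(2+\sqrt{2})^{m+1}-(2-\sqrt{2})^{m+1}}{2\sqrt{2}}$ agrees with \eqref{formula_of_Mi}. The only cosmetic difference is that you solve the recurrence by generating functions and partial fractions, whereas the paper telescopes it into the second-order recurrence $N_m=4N_{m-1}-2N_{m-2}$; both are routine and yield the same bound.
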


\subsection{Proof of Theorem \ref{upper_bound_theorem}}
\label{details_section}

Since the sequence of numbers $N_m$ is defined by a property involving multinomial coefficients, we will use Lucas' theorem to analyse the behavior of $N_m$. Surprisingly, we can even compute the exact values of $N_m$.

Let $n$ be a non-negative integer and $p$ a prime. Suppose that the base $p$ expansion of $n$ is given by $n=n_kp^k+n_{k-1}p^{k-1}+\cdots +n_1p+n_0$, where $0\leq n_i\leq p-1$ for all $i$. We may use the abbreviation $n=\lla n_kn_{k-1}\cdots n_1n_0\rr_p$ or $n=\lla n_k,n_{k-1},\cdots, n_1,n_0\rr_p$. In the case where $p=2$, we may drop the subscript $p$. We state Lucas' theorem as follows.
\begin{theorem}[Lucas' Theorem~\cite{MR938818}]
\label{Lucas_theorem}
Let $p$ be a prime, and express the non-negative integers $n,l_1,l_2,\ldots,l_s$ in base $p$ as
\begin{align*}
n=\lla n_k,n_{k-1},\ldots, n_1,n_0\rr_p;\quad l_i=\lla l_{i,k},l_{i,k-1},\ldots ,l_{i,1},l_{i,0}\rr_p,
\end{align*}
where $n_j,l_{i,j}\in\{0,1,\ldots,p-1\}$ for $j=0,1,\ldots,k$ and $i=1,2,\ldots,s$. Then
\[{n\choose l_1,l_2,\ldots,l_s}\equiv \prod_{j=0}^k{n_j\choose l_{1,j},l_{2,j}\ldots,l_{s,j}}\pmod{p}.\]
\end{theorem}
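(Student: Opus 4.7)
My plan is to deduce Lucas' theorem for multinomials via a generating-function argument in the polynomial ring $\mathbb{F}_p[y_1, y_2, \ldots, y_s]$, based on the Frobenius identity. First I would establish the ``Freshman's dream'' $(y_1 + y_2 + \cdots + y_s)^p \equiv y_1^p + y_2^p + \cdots + y_s^p \pmod{p}$; this follows from the binomial case (since $p \mid \binom{p}{k}$ for $0 < k < p$) by induction on the number of summands $s$. Iterating this identity gives $(y_1 + \cdots + y_s)^{p^j} \equiv y_1^{p^j} + \cdots + y_s^{p^j} \pmod{p}$ for every integer $j \ge 0$.

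Next, using the base-$p$ expansion $n = \sum_{j=0}^k n_j p^j$, I would write
\[
(y_1 + \cdots + y_s)^n = \prod_{j=0}^k \bigl( (y_1 + \cdots + y_s)^{p^j} \bigr)^{n_j} \equiv \prod_{j=0}^k (y_1^{p^j} + \cdots + y_s^{p^j})^{n_j} \pmod{p}.
\]
Expanding each factor on the right via the ordinary multinomial theorem and multiplying out yields
\[
\prod_{j=0}^k \ \sum_{l_{1,j} + \cdots + l_{s,j} = n_j} \binom{n_j}{l_{1,j}, \ldots, l_{s,j}} y_1^{l_{1,j} p^j} \cdots y_s^{l_{s,j} p^j}.
\]
The exponent of $y_i$ in a typical term of this product is $\sum_{j=0}^k l_{i,j} p^j$ with $0 \le l_{i,j} \le n_j \le p - 1$, so it is precisely the base-$p$ representation of some integer $l_i$, and the digits $l_{i,j}$ are uniquely recoverable from $l_i$.

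Finally I would compare coefficients of $y_1^{l_1} \cdots y_s^{l_s}$ on both sides of the resulting congruence in $\mathbb{F}_p[y_1, \ldots, y_s]$: the left side yields $\binom{n}{l_1, \ldots, l_s} \bmod p$ by the ordinary multinomial theorem, while the right side yields $\prod_{j=0}^k \binom{n_j}{l_{1,j}, \ldots, l_{s,j}}$, which is exactly Lucas' congruence. The main bookkeeping step --- and the only place genuine care is needed --- is handling the ``carry'' case: if the base-$p$ digits $l_{i,j}$ fail to satisfy $\sum_i l_{i,j} = n_j$ for some $j$, then no tuple of summation indices in the expansion produces the monomial $y_1^{l_1} \cdots y_s^{l_s}$, so the right-hand coefficient is $0$, in agreement with the standard convention $\binom{n_j}{l_{1,j}, \ldots, l_{s,j}} = 0$ whenever $\sum_i l_{i,j} \ne n_j$. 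I do not expect any obstacle beyond this digit bookkeeping and verifying the uniqueness of the base-$p$ representation.
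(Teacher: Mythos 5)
Your proposal is correct: it is the standard generating-function proof of Lucas' theorem for multinomial coefficients, via the Frobenius identity $(y_1+\cdots+y_s)^{p^j}\equiv y_1^{p^j}+\cdots+y_s^{p^j}\pmod p$ and coefficient comparison, with the uniqueness of base-$p$ digits guaranteeing that exactly one term of the expanded product contributes to each monomial $y_1^{l_1}\cdots y_s^{l_s}$. Note that the paper itself gives no proof of this statement --- it is quoted as a known result with a citation to Chao and Zhang --- so there is no in-paper argument to compare against; your treatment of the carry case (where some $\sum_i l_{i,j}\neq n_j$ forces both sides to vanish) is the one point requiring care, and you handle it correctly.
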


In the case where $p=2$, we will drop (mod $2$) to simplify notation. Before doing the actual computations, we will fix some notation as follows.

\begin{definition}
    Let $a,b,c$ be non-negative integers. We write $a+b\lessdot c$, if the following conditions hold:
    \[a_i+b_i\leq c_i\text{ for all }i=0,\ldots,k,\]
    where $a=\lla a_ka_{k-1}\cdots a_1a_0\rr_2,\;b=\lla b_kb_{k-1}\cdots b_1b_0\rr_2,\;c=\lla c_kc_{k-1}\cdots c_1c_0\rr_2$.
\end{definition}

For $0\leq s\leq q-1$, define 
\begin{align*}
B_s:=\left\{2l_1+l_2\,\big|\, {q-1\choose l_1,l_2,q-1-s,l_4}\equiv 1\text{ for some }l_4\right\}
\end{align*} 
and $b_s:=|B_s|$. Note that the base $2$ expansion of $q-1$ is $\lla \underbrace{11\cdots 1}_m\rr$. By Theorem \ref{Lucas_theorem} we know that ${q-1\choose l_1,l_2,q-1-s,l_4}\equiv 1\pmod{2}$ if and only if the addition $l_1+l_2+(q-1-s)+l_4=q-1$ involves no carries, which in turn is equivalent to  $l_1+l_2\lessdot s$ and $l_4=s-l_1-l_2$. We now rewrite $B_s$ as
\begin{equation*}
B_s=\left\{2l_1+l_2\mid l_1+l_2\lessdot s\right\}.
\end{equation*}
Note that we have $N_m=\sum_{s=0}^{q-1}b_s$. 

\begin{lemma}
\label{divide_binary_series_to_disjoint_parts}
    Let $s=\lla\alpha_1,\alpha_2,\ldots,\alpha_n,\beta_1,\beta_2,\ldots,\beta_k\rr$. Then \[B_s=B_{s_1}\times2^k+B_{s_2}:=\left\{r2^k+t\,\big|\, r\in B_{s_1},t\in B_{s_2}\right\},\]
    where $s_1=\lla \alpha_1,\alpha_2,\ldots,\alpha_n\rr$ and $s_2=\lla \beta_1,\beta_2,\ldots,\beta_k\rr$.
\end{lemma}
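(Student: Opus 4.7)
The plan is to prove the identity by establishing both inclusions $B_s \subseteq B_{s_1} \cdot 2^k + B_{s_2}$ and $B_{s_1} \cdot 2^k + B_{s_2} \subseteq B_s$. The key observation is that the relation $l_1 + l_2 \lessdot s$ is imposed one binary digit at a time, so it decouples cleanly at the boundary between positions $k-1$ and $k$: since $s = s_1 \cdot 2^k + s_2$, the bottom $k$ bits of $l_1, l_2$ must satisfy $\lessdot s_2$ and the top $n$ bits must satisfy $\lessdot s_1$. The entire argument is then a piece of binary bookkeeping.

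For the forward inclusion, I would take an arbitrary element $2l_1 + l_2 \in B_s$ and split $l_1 = l_1^{(1)} \cdot 2^k + l_1^{(2)}$ and $l_2 = l_2^{(1)} \cdot 2^k + l_2^{(2)}$, where the superscripts $(1)$ and $(2)$ pick out the top $n$ bits and the bottom $k$ bits respectively. By the digit-wise nature of $\lessdot$, the hypothesis $l_1 + l_2 \lessdot s$ is equivalent to the conjunction of $l_1^{(1)} + l_2^{(1)} \lessdot s_1$ and $l_1^{(2)} + l_2^{(2)} \lessdot s_2$. A direct expansion yields $2l_1 + l_2 = (2 l_1^{(1)} + l_2^{(1)}) \cdot 2^k + (2 l_1^{(2)} + l_2^{(2)})$, which has the required form with $2 l_1^{(1)} + l_2^{(1)} \in B_{s_1}$ and $2 l_1^{(2)} + l_2^{(2)} \in B_{s_2}$. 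The reverse inclusion is symmetric: given $r = 2a_1 + a_2 \in B_{s_1}$ and $t = 2b_1 + b_2 \in B_{s_2}$ with the corresponding $\lessdot$ conditions, I would set $l_1 := a_1 \cdot 2^k + b_1$ and $l_2 := a_2 \cdot 2^k + b_2$; the automatic bounds $a_1, a_2 < 2^n$ and $b_1, b_2 < 2^k$ ensure the bit-blocks do not collide, the digit-wise inequalities recombine to $l_1 + l_2 \lessdot s$, and the same algebraic identity shows $2l_1 + l_2 = r \cdot 2^k + t$.

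No real obstacle is expected here: the lemma is essentially the statement ``split the high and low bits and handle each half separately.'' The only thing to verify carefully is that the bit-splitting and the bit-reassembly operations are honest inverses of each other, and this is immediate once one notes that the $\lessdot$ relation automatically supplies the size bounds needed to prevent the high and low halves from overlapping.
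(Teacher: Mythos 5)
Your proposal is correct and follows essentially the same route as the paper: both directions are proved by splitting $l_1,l_2$ into their top $n$ bits and bottom $k$ bits (the paper phrases this via the division algorithm by $2^k$), observing that $\lessdot$ decouples digit-wise across the block boundary, and recombining via the identity $2l_1+l_2=(2r_1+r_2)2^k+(2t_1+t_2)$. No substantive difference.
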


\begin{proof}
    On the one hand, $B_s\subseteq B_{s_1}\times2^k+B_{s_2}$. This can be seen as follows. Assume that $l_1+l_2\lessdot s$. By the division algorithm we write $l_1=r_1 2^k+t_1,0\leq t_1<2^k$, where the quotient and remainder, $r_1,t_1$, are uniquely determined. Similarly for $l_2$ we obtain the quotient and the remainder, $r_2,t_2$, respectively. As $l_1+l_2\lessdot s$, we have $r_1+r_2\lessdot s_1,t_1+t_2\lessdot s_2$ and thus $2l_1+l_2=2(r_1\times2^k+t_1)+(r_2\times2^k+t_2)=(2r_1+r_2)\times2^k+(2t_1+t_2)\in B_{s_1}\times2^k+B_{s_2}$.

    On the other hand, $B_s\supseteq B_{s_1}\times2^k+B_{s_2}$: Assume $r_1+r_2\lessdot s_1,t_1+t_2\lessdot s_2$. Let $l_1=r_1\times 2^k+t_1,l_2=r_2\times 2^k+t_2$. Then $l_1+l_2\lessdot s$. So $(2r_1+r_2)\times2^k+(2t_1+t_2)=2(r_1\times2^k+t_1)+(r_2\times2^k+t_2)=2l_1+l_2\in B_s$.
\end{proof}

\begin{proposition}
\label{allone_case}
    Let $i$ be a positive integer. Then 
    \[b_{2^{i-1}-1}=2^i-1.\]
\end{proposition}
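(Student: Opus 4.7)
The plan is to prove the stronger set identity $B_{2^{i-1}-1} = \{0, 1, 2, \ldots, 2^i-2\}$, from which $b_{2^{i-1}-1} = 2^i-1$ follows immediately by counting. I will do this by induction on $i$, with the splitting Lemma \ref{divide_binary_series_to_disjoint_parts} as the central tool.

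\textbf{Base case} ($i=1$): Here $s = 0$, so $l_1+l_2 \lessdot 0$ forces $l_1 = l_2 = 0$, giving $B_0 = \{0\}$ and $b_0 = 1 = 2^1 - 1$. For safety I will also verify $i=2$ directly: the three valid pairs $(l_1,l_2)\in\{(0,0),(0,1),(1,0)\}$ produce $B_1 = \{0,1,2\}$, so $b_1 = 3$.

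\textbf{Inductive step:} Fix $i \ge 2$ and assume $B_{2^{i-2}-1} = \{0,1,\ldots,2^{i-1}-2\}$. I will apply Lemma \ref{divide_binary_series_to_disjoint_parts} to $s = 2^{i-1}-1 = \langle \underbrace{1,1,\ldots,1}_{i-1} \rangle$ by peeling off the least significant bit, i.e., taking $k=1$, $s_2 = 1$, and $s_1 = 2^{i-2} - 1$. The lemma then yields $B_s = 2\cdot B_{s_1} + B_1$, where $B_1 = \{0,1,2\}$ as computed in the base case. By the induction hypothesis, $2\cdot B_{s_1} = \{0,2,4,\ldots,2^i-4\}$, the set of even integers in $[0,2^i-4]$.

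\textbf{Finishing the induction:} It remains to show that adding $\{0,1,2\}$ pointwise to this set of evens covers exactly $\{0,1,\ldots,2^i-2\}$. Shifting by $0$ gives the evens in $[0,2^i-4]$; shifting by $2$ gives the evens in $[2,2^i-2]$; their union is all evens in $[0,2^i-2]$. Shifting by $1$ gives the odds in $[1,2^i-3]$, which is all odds in $[0,2^i-2]$. Together these three shifts cover $\{0,1,\ldots,2^i-2\}$ exactly, completing the induction.

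\textbf{Main obstacle:} There is no substantial obstacle; the proof is mechanical once one chooses the right split. The only point requiring a moment's thought is picking $k=1$ rather than $k=i-2$, so that the factor $2\cdot B_{s_1}$ is a set of consecutive evens and adding $B_1 = \{0,1,2\}$ cleanly fills in both the odd values and extends the range by one more even. Any other split would force one to reason about overlapping arithmetic progressions with larger common difference, which is messier.
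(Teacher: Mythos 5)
Your proof is correct, but it follows a genuinely different route from the paper's. The paper argues directly, without induction and without invoking Lemma \ref{divide_binary_series_to_disjoint_parts}: it first fixes $l_1=0$ and lets $l_2$ range over $0,1,\ldots,2^{i-1}-1$ to get $\{0,\ldots,2^{i-1}-1\}\subseteq B_{2^{i-1}-1}$, then takes $l_1+l_2=2^{i-1}-1$ (automatically carry-free against $\lla 1\cdots 1\rr$) and lets $l_1$ vary to get $\{2^{i-1}-1,\ldots,2^{i}-2\}\subseteq B_{2^{i-1}-1}$; the two overlapping intervals cover $\{0,\ldots,2^{i}-2\}$, and the reverse inclusion is immediate since $2l_1+l_2=l_1+(l_1+l_2)\le 2(2^{i-1}-1)$. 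Your induction, peeling off the least significant bit so that $B_{2^{i-1}-1}=2\cdot B_{2^{i-2}-1}+\{0,1,2\}$, is equally valid — the splitting lemma is established before this proposition, so there is no circularity — and it has the virtue of previewing exactly the decomposition technique the paper exploits later (in Proposition \ref{recusion_formula} and in the $n=2^r+1$ section). What the paper's approach buys is brevity and independence from the lemma; what yours buys is a completely mechanical verification once the split is chosen. One small point in your favor: your separate direct check of $i=2$ is wise, since for that case the high part of the split is empty ($s_1=0$), a degenerate instance of the lemma's hypothesis.
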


\begin{proof}
    Note that $B_{2^{i-1}-1}=\{2l_1+l_2\mid l_1+l_2\lessdot 2^{i-1}-1\}$. Fixing $l_1=0$, we can take $l_2=0,1,2,\ldots,2^{i-1}-1$, then $2l_1+l_2=0,1,2,\ldots,2^{i-1}-1$.

    Let $l_1+l_2=2^{i-1}-1$. Then we have $l_1+l_2\lessdot 2^{i-1}-1$ and $2l_1+l_2=l_1+2^{i-1}-1$. As $l_1$ varies from $0$ to $2^{i-1}-1$, $2l_1+l_2$ varies from $2^{i-1}-1$ to $2^i-2$. So $B_{2^{i-1}-1}=\left\{0,1,2,\ldots,2^i-2\right\}$. The claim now follows.
\end{proof}

\begin{lemma}
\label{divide_segment}
    Let $s=\lla\alpha_1,\alpha_2,\ldots,\alpha_n,0,\beta_1,\beta_2,\ldots,\beta_k\rr$. Then
    \[b_s=b_{s_1}b_{s_2},\]
    where $s_1=\lla \alpha_1,\alpha_2,\ldots,\alpha_n\rr$ and $s_2=\lla \beta_1,\beta_2,\ldots,\beta_k\rr$.
\end{lemma}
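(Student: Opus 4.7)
The plan is to apply Lemma \ref{divide_binary_series_to_disjoint_parts} with the cut placed just above the distinguished zero bit, so that this zero becomes the highest bit of the ``low'' block. Concretely, I would view the expansion of $s$ as an $n$-bit top block $\langle \alpha_1,\ldots,\alpha_n\rangle$ followed by a $(k{+}1)$-bit bottom block $\langle 0,\beta_1,\ldots,\beta_k\rangle$. Since $\langle 0,\beta_1,\ldots,\beta_k\rangle = s_2$ as an integer, Lemma \ref{divide_binary_series_to_disjoint_parts} yields
\[
B_s \;=\; B_{s_1} \times 2^{k+1} + B_{s_2}.
\]
It therefore suffices to show that the natural surjection $\phi : B_{s_1} \times B_{s_2} \to B_s$ defined by $(r,t) \mapsto r\cdot 2^{k+1} + t$ is injective, for then $b_s = b_{s_1}\,b_{s_2}$.

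For injectivity, the key step is the bound $\max B_{s_2} < 2^{k+1}$. I would derive it from the sharper estimate $\max B_{s_2} \le 2s_2$ as follows. The relation $l_1 + l_2 \lessdot s_2$ is a bit-wise inequality and $s_2$ has binary digits in $\{0,1\}$, so at every position $l_{1,i} + l_{2,i} \le 1$. Consequently the ordinary integer addition $l_1 + l_2$ produces no carries and coincides with the bit-wise sum, whence $l_1 + l_2 \le s_2$ as integers, and in particular $l_1 \le s_2$. Hence
\[
2l_1 + l_2 \;=\; l_1 + (l_1 + l_2) \;\le\; 2s_2 \;\le\; 2(2^k - 1) \;<\; 2^{k+1}.
\]
Now if $r_1\cdot 2^{k+1} + t_1 = r_2\cdot 2^{k+1} + t_2$ with $r_i \in B_{s_1},\, t_i \in B_{s_2}$, then $|t_2 - t_1| < 2^{k+1}$ forces $r_1 = r_2$ and $t_1 = t_2$, so $\phi$ is a bijection.

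The only substantive piece is the bound $\max B_{s_2} < 2^{k+1}$; everything else is bookkeeping on top of the previous lemma. This is precisely the role of the ``buffer zero bit'' between the $\alpha$- and $\beta$-blocks: if the separating bit were a $1$, then $B_{s_2}$ could contain elements exceeding $2^{k+1}$ (as in Proposition \ref{allone_case}, where $\max B_{2^{i-1}-1} = 2^{i}-2$ is already within a factor of $2$ of the dangerous threshold), and distinct pairs $(r,t)$ could collide under $\phi$, so the product formula would fail. I do not expect any other obstacle.
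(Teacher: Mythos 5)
Your proposal is correct and follows essentially the same route as the paper: split $s$ via Lemma \ref{divide_binary_series_to_disjoint_parts} so that the zero bit heads the low block (whose integer value is $s_2$), then show the map $(r,t)\mapsto r\cdot 2^{k+1}+t$ is injective using $\max B_{s_2}\le 2s_2<2^{k+1}$. The paper asserts the bound $t\le 2s_2$ without justification, so your explicit derivation of it is a welcome elaboration rather than a deviation.
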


\begin{proof}
    By Lemma \ref{divide_binary_series_to_disjoint_parts} we have
    \[B_s=B_{s_1}\times2^{k+1}+B_{s_2}.\]
    Note that for any $t\in B_{s_2}$, $t\leq 2s_2<2^{k+1}$. Assume there are two pairs $(r_1,t_1),(r_2,t_2)\in B_{s_1}\times B_{s_2}$, such that $r_1\times2^{k+1}+t_1=r_2\times2^{k+1}+t_2$. Then $(r_1-r_2)\times2^{k+1}+(t_1-t_2)=0$ and thus $r_1-r_2=t_1-t_2=0$, i.e., $(r_1,t_1)=(r_2,t_2)$. Hence $\#B_s=\#(B_{s_1}\times2^{k+1}+B_{s_2})=\#\left(B_{s_1}\times B_{s_2}\right)$.
\end{proof}

\begin{example}
\label{initial_term}
    By direct calculations, we have
    \begin{align*}
        b_0&=b_{2^0-1}=2^1-1=1, &\text{by Proposition \ref{allone_case}} \\
        b_1&=b_{2^1-1}=2^2-1=3, &\text{by Proposition \ref{allone_case}} \\
        b_2&=b_{\lla 10\rr}=b_1b_0=3, &\text{by Lemma \ref{divide_segment}} \\
        b_3&=b_{2^2-1}=2^3-1=7. &\text{by Proposition \ref{allone_case}}
    \end{align*}
    Thus $N_1=b_0+b_1=4,N_2=b_0+b_1+b_2+b_3=14$. We will use the initial values $N_1,N_2$ to determine the general formula of $N_m$ in Theorem \ref{general_term}.
\end{example}

\begin{proposition}
\label{recusion_formula}
    The sequence of numbers $N_m$ satisfies:
    \begin{equation*}
    \begin{aligned}
        N_m=\sum_{j=1}^{m+1}(2^j-1)N_{m-j},\quad m\geq 1,
    \end{aligned}
    \end{equation*}
    where $N_0=N_{-1}=1$.   
\end{proposition}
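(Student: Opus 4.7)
The plan is to evaluate $N_m = \sum_{s=0}^{2^m-1} b_s$ by partitioning the summation range $\{0, 1, \ldots, 2^m-1\}$ according to the length of the initial run of $1$s in the $m$-bit binary expansion of $s$. For $j = 1, \ldots, m$, let $\mathcal{S}_j$ consist of those $s$ whose $m$-bit expansion has the form $\lla \underbrace{1, \ldots, 1}_{j-1}, 0, \beta_1, \ldots, \beta_{m-j}\rr$, and let $\mathcal{S}_{m+1} = \{2^m-1\}$. These sets partition $\{0, 1, \ldots, 2^m-1\}$, since every $s$ in this range either is all $1$s or has a uniquely determined first $0$ when read from the top bit.

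For $s \in \mathcal{S}_j$ with $j \leq m$, Lemma \ref{divide_segment} applied with $s_1 = 2^{j-1}-1$ and $s_2 = \lla \beta_1, \ldots, \beta_{m-j}\rr$, together with Proposition \ref{allone_case}, gives
\[b_s \;=\; b_{2^{j-1}-1}\cdot b_{s_2} \;=\; (2^j-1)\,b_{s_2}.\]
As $s$ varies over $\mathcal{S}_j$, the tail $s_2$ ranges freely over $\{0, 1, \ldots, 2^{m-j}-1\}$, so summing over $\mathcal{S}_j$ contributes $(2^j-1)\,N_{m-j}$. The case $j = m+1$ is handled by Proposition \ref{allone_case} with $i=m+1$: $b_{2^m-1} = 2^{m+1}-1$, which matches the desired $(2^{m+1}-1)N_{-1}$ under the convention $N_{-1}=1$. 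Adding the contributions for $j = 1, \ldots, m+1$ yields the stated recursion.

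The main obstacle I anticipate is cleanly handling the two boundary strata: $j = 1$, where the leading run of $1$s is empty and Lemma \ref{divide_segment} must be read with an empty prefix; and $j = m+1$, which is not covered by that lemma at all. For $j = 1$ the cleanest route is to bypass the lemma and observe directly from the definition $B_s = \{2l_1+l_2 : l_1+l_2 \lessdot s\}$ that prepending a $0$-bit to $s$ leaves $B_s$ unchanged, so $b_s = b_{s_2}$, in agreement with $(2^1-1)b_{s_2}$. For $j = m+1$ one simply invokes Proposition \ref{allone_case}. The convention $N_{-1}=1$ is precisely what is required to absorb the all-ones stratum uniformly into the sum, and once these boundary cases are dispatched the recursion follows from a single summation.
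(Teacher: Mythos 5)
Your proof is correct and follows essentially the same route as the paper: both partition $\{0,\ldots,2^m-1\}$ by the length of the leading run of $1$s (equivalently, the position of the first $0$ bit), apply Lemma \ref{divide_segment} and Proposition \ref{allone_case} on each stratum, and absorb the all-ones term via the convention $N_{-1}=1$. The only cosmetic difference is that the paper treats the stratum $\{2^m-2\}$ as a separate singleton case rather than folding it into the generic sum with $N_0=1$, and your explicit handling of the empty-prefix boundary at $j=1$ is if anything slightly more careful than the paper's.
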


\begin{proof}
     Define
     \begin{equation*}
     \begin{aligned}
         E^{(m)}:&=\left\{t\,\bigg|\, 0\leq t\leq 2^m-1\right\}\\
         &=\left\{\lla t_{m-1},\ldots,t_1,t_0\rr\,\bigg|\, t_i\in \{0,1\},\forall i\right\}, \\
         E_j^{(m)}:&=\left\{\lla\underbrace{1,1,\ldots,1}_{j-1},0,t_{m-j-1},\ldots,t_1,t_0\rr\,\bigg|\, t_i\in \{0,1\},\forall i\right\} \\
         &=\{2^{j-1}-1\}\times2^{m-j+1}+E^{(m-j)}, j=1,\ldots,m-1; \\
         E_{m}^{(m)}:&=\left\{\lla\underbrace{11\cdots1}_{m-1}0\rr\right\},\quad
         E_{m+1}^{(m)}:=\left\{\lla\underbrace{11\cdots1}_{m}\rr\right\}.
     \end{aligned}
     \end{equation*} 
     It is clear that $E^{(m)}$ is the disjoint union of $E_j^{(m)},j=1,\ldots,m+1$, namely $E^{(m)}=\cup_{j=1}^{m+1}E_j^{(m)}$.  Note that $N_m=\sum_{t=0}^{2^m-1}b_s=\sum_{s\in E^{(m)}}b_s$. Then by Proposition \ref{allone_case} and Lemma \ref{divide_segment} we have
     \begin{equation*}
     \begin{aligned}
         N_m&=\sum_{s\in E^{(m)}}b_s=\sum_{j=1}^{m-1}\sum_{s\in E_j^{(m)}}b_s+2^{m}-1+2^{m+1}-1  \\
         &=\sum_{j=1}^{m-1}\sum_{t\in E^{(m-j)}}b_{2^{j-1}-1}b_t+2^{m}-1+2^{m+1}-1  \\
         &=\sum_{j=1}^{m-1} b_{2^{j-1}-1}\sum_{t\in E^{(m-j)}}b_t+2^{m}-1+2^{m+1}-1 \\
         &=\sum_{j=1}^{m-1}b_{2^{j-1}-1}N_{m-j}+2^{m}-1+2^{m+1}-1  \\
         &=\sum_{j=1}^{m-1}(2^j-1)N_{m-j}+2^{m}-1+2^{m+1}-1 \\
         &=\sum_{j=1}^{m+1}(2^j-1)N_{m-j}.
     \end{aligned}
     \end{equation*}
\end{proof}

\begin{theorem}
\label{general_term}
We have   
\begin{equation}
\label{formula_of_Mi}
N_m=\frac{1+\sqrt{2}}{2}(2+\sqrt{2})^m+\frac{1-\sqrt{2}}{2}(2-\sqrt{2})^m,\quad m\geq 0.
\end{equation}
\end{theorem}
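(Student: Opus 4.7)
The plan is to reduce the $(m+1)$-term recursion from Proposition \ref{recusion_formula} to a simple two-term linear recursion with constant coefficients, and then solve it by the characteristic-polynomial method. Specifically, I expect to show that
\[
N_m = 4N_{m-1} - 2N_{m-2}, \qquad m \geq 2,
\]
and then verify that the closed form \eqref{formula_of_Mi} satisfies this recursion and matches the initial values $N_0 = 1$ and $N_1 = 4$ (both already recorded in Example \ref{initial_term}).

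To derive the two-term recursion, the key observation is that in the expansion
\[
N_m = \sum_{j=1}^{m+1}(2^j-1)N_{m-j},
\]
the coefficient $2^j-1$ satisfies $(2^j-1) - 2(2^{j-1}-1) = 1$. Hence, subtracting $2N_{m-1} = 2\sum_{j=1}^{m}(2^j-1)N_{m-1-j}$ from $N_m$ and reindexing will telescope almost all of the coefficients down to $1$, giving an identity of the shape
\[
N_m - 2N_{m-1} = N_{m-1} + \sum_{i=-1}^{m-2} N_i.
\]
Writing the analogous identity for $N_{m-1} - 2N_{m-2}$ and subtracting collapses the tail sum $\sum_{i=-1}^{m-2} N_i - \sum_{i=-1}^{m-3} N_i = N_{m-2}$, producing $N_m - 3N_{m-1} + 2N_{m-2} = N_{m-1}$, which rearranges to $N_m = 4N_{m-1} - 2N_{m-2}$. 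This telescoping/subtraction step is the only substantive part of the argument; everything else is routine.

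Once the two-term recursion is established, the characteristic polynomial is $x^2 - 4x + 2$, whose roots are $2 \pm \sqrt{2}$. Thus $N_m = A(2+\sqrt{2})^m + B(2-\sqrt{2})^m$ for some constants $A,B$. Solving the linear system $A + B = N_0 = 1$ and $(2+\sqrt{2})A + (2-\sqrt{2})B = N_1 = 4$ gives $A = \tfrac{1+\sqrt{2}}{2}$ and $B = \tfrac{1-\sqrt{2}}{2}$, exactly as claimed in \eqref{formula_of_Mi}.

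The only delicate point is bookkeeping at the boundary of the telescoping: one must be careful that the convention $N_{-1} = 1 = N_0$ is consistent with the recursion for the small cases $m = 1, 2$, so that the derivation of $N_m = 4N_{m-1} - 2N_{m-2}$ remains valid starting from $m = 2$. Once this is checked against the values $N_1 = 4$ and $N_2 = 14$ computed in Example \ref{initial_term} (the formula indeed gives $4 \cdot 4 - 2 \cdot 1 = 14$), the induction runs without further obstruction, and Theorem \ref{upper_bound_theorem} follows immediately from \eqref{upper_bound_for_Dm} and \eqref{formula_of_Mi} by discarding the exponentially small term $\tfrac{1-\sqrt{2}}{2}(2-\sqrt{2})^m$ (which is negative, so dropping it only strengthens the bound).
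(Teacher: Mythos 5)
Your proposal is correct and follows essentially the same route as the paper: subtract $2N_{m-1}$ from the long recursion to collapse the coefficients $2^j-1$ to $1$, repeat the shift-and-subtract once more to eliminate the tail sum, arrive at $N_m=4N_{m-1}-2N_{m-2}$ (valid from $m=2$ after checking $N_2=14=4\cdot 4-2\cdot 1$), and solve via the characteristic polynomial $x^2-4x+2$ with the initial values $N_0=1$, $N_1=4$. The bookkeeping at the boundary that you flag is exactly the point the paper handles by verifying the recurrence at $m=2$ separately.
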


\begin{proof}
By Proposition \ref{recusion_formula}, we have
\begin{equation}
\label{formula_4}
N_m=\sum_{j=1}^{m+1}(2^j-1)N_{m-j},\quad m\geq 1.
\end{equation}
Replacing $m$ by $m-1$, we get
\begin{equation}
\label{formula_5}
\begin{aligned}   
N_{m-1}&=\sum_{j=1}^{m}(2^j-1)N_{m-1-j}\\
&=\sum_{j=2}^{m+1}(2^{j-1}-1)N_{m-j},\quad m\geq 2, 
\end{aligned}
\end{equation}
Using \eqref{formula_4} and \eqref{formula_5}, we obtain
\begin{equation*}
N_m-2N_{m-1}=N_{m-1}+\sum_{j=2}^{m+1}N_{m-j},\quad m\geq 2.\quad 
\end{equation*}
It follows that
\begin{equation} 
\label{formula_6}
N_m=3N_{m-1}+\sum_{j=2}^{m+1}N_{m-j},\quad m\geq 2.
\end{equation}
Again, replacing $m$ by $m-1$, we get
\begin{equation}
\label{formula_7}
N_{m-1}=3N_{m-2}+\sum_{j=3}^{m+1}N_{m-j},\quad m\geq 3, 
\end{equation}
Using \eqref{formula_6} and \eqref{formula_7}, we obtain
\begin{equation*}
N_m-N_{m-1}=3N_{m-1}-3N_{m-2}+N_{m-2},\quad m\geq 3.\quad
\end{equation*}
Conseqeuntly,
\begin{equation*}
N_m=4N_{m-1}-2N_{m-2},\quad m\geq 3.
\end{equation*}
Taking $m=2$, we find that the initial values $N_0=1,N_1=4,N_2=14$ satisfy this linear recurrence relation. So the above recurrence holds whenever $m\geq 2$. Solving the linear recurrence we obtain 
\begin{align*}
N_m=\frac{1+\sqrt{2}}{2}(2+\sqrt{2})^m+\frac{1-\sqrt{2}}{2}(2-\sqrt{2})^m,\quad m\geq 2.    
\end{align*}
\end{proof}

We are now ready to give the proof of Theorem \ref{upper_bound_theorem}.

\noindent\textit{Proof of Theorem \ref{upper_bound_theorem}.}
    We have the following upper bound:
    \begin{equation*}
    \begin{aligned}
         \mathrm{rank}(  D_m)&\leq N_m 
         =\frac{1+\sqrt{2}}{2}(2+\sqrt{2})^m+\frac{1-\sqrt{2}}{2}(2-\sqrt{2})^m \\
         &\leq \frac{1+\sqrt{2}}{2}(2+\sqrt{2})^m. 
    \end{aligned}
    \end{equation*}
    Therefore,
    \begin{equation*}
    \begin{aligned}
        R(  D_m)&=\frac{\mathrm{rank}(  D_m)}{q^2} 
        \leq\frac{1+\sqrt{2}}{2}\left(\frac{2+\sqrt{2}}{4}\right)^m.
    \end{aligned}
    \end{equation*}
From the above upper bound on $R(D_m)$, we immediately see that the BCH family is of unit rate. $\hfill\square$

\subsection{The ambient graphs of the BCH Family}

For a given positive integer $m$, the graph of the BCH family is $\Gamma(V,E)=\Cay{\bb F_q^2}{S_m\backslash\{0\}}$, where $q=2^m$. The number of vertices is $N=|V|=q^2=2^{2m}$. Note that $\Gamma$ is a regular graph and each vertex has degree $|S_m|-1=q-1=2^m-1$, so the number of edges is 
\[|E|=\frac{N(2^m-1)}{2}=\frac{N(\sqrt{N}-1)}{2}=O(N^{3/2}).\]

It is clear that $\Gamma$ is simple.  We claim that $\Gamma$ is connected when $m>2$. The proof is given in the next section.

We show that $\Gamma$ is triangle-free: Let $a,b,c\in \bb F_q$ be distinct nonzero elements such that $a+b+c=0$. We claim that $a^3+b^3+c^3\neq 0$. If not, then $c^3=(a+b)^3=a^3+a^2b+ab^2+b^3=a^3+b^3$ and we obtain $a=b$, a contradiction. Hence the sum of any three distinct nonzero vectors in $S_m$ is nonzero and thus $\Gamma$ is triangle-free.

\section{The generalized BCH family}
\label{section_generalized_BCH_family}

Recall that in the BCH family, we investigate the coset matrix of $S_m$ in $\bb F_q^2$, where $q=2^m$ and $S_{m}=\{(a,a^3)\mid a\in \bb F_q\}$. Now define 
\[S_{n,m}:=\{(a,a^n)\mid a\in \bb F_q\}\subseteq \bb F_q^2,\]
where $n$ is a fixed odd integer and $1< n\leq q-1$. Then we obtain the \textit{generalized BCH family}  $F_n$ on the graph $\Gamma_{n,m}=\Cay{\bb F_q^2}{S_{n,m}\backslash\{0\}}$.

\begin{remark}
    In the above generalization, we require $n$ to be odd. In fact, the matrix $H_{n,m}$ has the same rank as $H_{n/2,m}$ when $n$ is even, where $H_{n,m}$ denotes the coset matrix of $S_{n,m}$ in $\bb F_q^2$.
\end{remark}

 To prove that $\Gamma_{n,m}$ is connected, we need to show any vector in $\bb F_q^2$ is a sum of vectors in $S_m$. That is, viewing $\bb F_q^2$ as a $2m$-dimensional $\F$-vector space, we need to show that $S_m$ contains a basis of $\bb F_q^2$.

We now show that when $m$ is large enough, the graph $\Gamma_{n,m}$ is connected. The following proof can be found in most coding theory textbooks. For more details, we refer the readers to \cite{ling_xing_2004}.

\begin{theorem}
 Let $n>1$ be an odd integer. If $2^{\frac{m}{2}}+1>n$, then $S_m$ contains a $\F$-basis for $\bb F_q^2$; and the graph $\Gamma_{n,m}$ is connected.
\end{theorem}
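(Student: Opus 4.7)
The plan is to dualize the spanning condition via the trace pairing and then reduce the problem to a statement about cyclotomic cosets modulo $q-1$. First, I would argue by contradiction: suppose $V:=\mathrm{span}_{\mathbb{F}_2}(S_{n,m})\subsetneq \mathbb{F}_q^2$. Since the trace pairing $(u,v)\mapsto\mathrm{Tr}(uv)$ is $\mathbb{F}_2$-bilinear and non-degenerate on $\mathbb{F}_q$, every $\mathbb{F}_2$-linear functional on $\mathbb{F}_q^2$ has the form $(u,v)\mapsto\mathrm{Tr}(\alpha u+\beta v)$. Hence there exist $(\alpha,\beta)\in\mathbb{F}_q^2\setminus\{(0,0)\}$ with
\[
\mathrm{Tr}(\alpha a+\beta a^n)=0 \quad \text{for every } a\in\mathbb{F}_q.
\]

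Expanding $\mathrm{Tr}(y)=\sum_{i=0}^{m-1}y^{2^i}$ rewrites this condition as $\sum_{i=0}^{m-1}\alpha^{2^i}a^{2^i}+\sum_{i=0}^{m-1}\beta^{2^i}a^{n\cdot 2^i}=0$ for all $a\in\mathbb{F}_q$. Reducing exponents modulo $q-1$ yields a polynomial in $a$ of degree at most $q-1$ vanishing on all $q$ points of $\mathbb{F}_q$, which must therefore be identically zero: every coefficient (of each distinct reduced exponent) must vanish. So the question reduces to controlling the two exponent sets $A:=\{2^i:0\le i\le m-1\}$ and $B:=\{n\cdot 2^i\bmod(q-1):0\le i\le m-1\}$.

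Two facts about $A$ and $B$ drive the argument. First, $A\cap B=\emptyset$: a coincidence $n\cdot 2^i\equiv 2^j\pmod{q-1}$ forces $n\equiv 2^{j-i}\pmod{q-1}$, and since the powers of $2$ modulo $q-1$ are precisely $\{1,2,4,\ldots,2^{m-1}\}$ and $1\le n\le q-1$, this would put $n$ into this set, contradicting that $n$ is odd and greater than $1$. Second, $|B|=m$: the size $|B|$ equals the multiplicative order of $2$ modulo $q':=(q-1)/\gcd(n,q-1)$, and this order divides $m$. If $|B|<m$, the order is at most $m/2$ (since every proper divisor of $m$ is $\le m/2$), so $q'\le 2^{m/2}-1$, giving $\gcd(n,q-1)\ge(2^m-1)/(2^{m/2}-1)=2^{m/2}+1$; this contradicts $\gcd(n,q-1)\le n<2^{m/2}+1$. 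I expect this last computation to be the main obstacle, since it is the only place the numerical hypothesis enters and it relies on recognizing the factorization $2^m-1=(2^{m/2}-1)(2^{m/2}+1)$ as matching the hypothesis precisely.

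With $A$ and $B$ disjoint of total size $2m$, vanishing of all coefficients yields $\alpha^{2^i}=0$ and $\beta^{2^i}=0$ for every $i$, hence $\alpha=\beta=0$, the desired contradiction. Thus $S_{n,m}$ spans $\mathbb{F}_q^2$ over $\mathbb{F}_2$ and in particular contains an $\mathbb{F}_2$-basis. Connectedness of $\Gamma_{n,m}$ is then immediate from the standard fact that a Cayley graph on an abelian group is connected if and only if the connection set generates the group; here $(\mathbb{F}_q^2,+)$ is an elementary abelian $2$-group, so generation as a group coincides with spanning as an $\mathbb{F}_2$-vector space.
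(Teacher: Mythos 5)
Your proof is correct, and it is the Delsarte-dual of the paper's argument. The paper works on the ``primal'' side: it exhibits $2m$ explicit vectors $(\alpha^k,\alpha^{nk})$, $k=0,\ldots,2m-1$, with $\alpha$ primitive, and shows a linear dependence would produce a polynomial $g$ of degree $\le 2m-1$ divisible by the product of the minimal polynomials $p_1p_2$ of $\alpha$ and $\alpha^n$, which has degree $2m$. You instead work on the dual side: a proper span would yield a nonzero trace functional $\mathrm{Tr}(\alpha u+\beta v)$ annihilating $S_{n,m}$, and expanding the trace gives a polynomial of degree $<q$ vanishing on all of $\mathbb{F}_q$, whose coefficients must all be zero. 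The arithmetic core is literally the same in both versions: your claim $|B|=m$ is exactly the paper's $\deg p_2=m$ (the size of the $2$-cyclotomic coset of $n$ modulo $q-1$), proved by the same inequality $\gcd$-type computation $(2^m-1)/(2^d-1)\ge 2^{m/2}+1>n$; and your $A\cap B=\emptyset$ is exactly the paper's observation that $\alpha$ and $\alpha^n$ are not conjugate because $n$ is odd and $n>1$. What your route buys is that it shows the whole set $S_{n,m}$ spans without having to select a clever $2m$-element subset, and it avoids invoking minimal polynomials; what the paper's route buys is an explicit basis inside $S_{n,m}$. Two small points you glossed over but which do go through: after reducing exponents modulo $q-1$ no exponent becomes $0$ (since $0<n\cdot 2^i\not\equiv 0\pmod{q-1}$), so the reduced polynomial still vanishes at $a=0$ and the root-counting argument applies to all $q$ points; and $2^m-1=(2^{m/2}-1)(2^{m/2}+1)$ holds as an identity of real numbers even for odd $m$, so your final inequality is valid without a parity case split.
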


\begin{proof}
    Let $\bb F_q^*=\lla \alpha\rr$. We claim that $\{(\alpha^k,\alpha^{nk})\mid k=0,1,\ldots,2m-1\}$ are linearly independent over $\F$. Assume that $\sum_{k=0}^{2m-1}c_k\alpha^k=\sum_{k=0}^{2m-1}c_k\alpha^{nk}=0$, where $c_k\in \F$. Let $g(x)=\sum_{k=0}^{2m-1}c_kx^k$. Then $g(\alpha)=g(\alpha^n)=0$.
    
    Let $p_1(x),p_2(x)$ be the minimal polynomials of $\alpha,\alpha^n$ in $\F[x]$ respectively. We know that $p_1,p_2$ are irreducible polynomials. As $\alpha$ is a primitive element of $\bb F_q$, we have $\deg p_1=m$. Note that $\alpha$ and $\alpha^n$ are not conjugate to each other as $n$ is odd and $1<n< 2^{\frac{m}{2}}+1<q-1$, so $p_1,p_2$ are coprime to each other. 

    Suppose that $\deg p_2=d$. We show that $d=m$: We know that $d|m$ and $\alpha^n=\alpha^{n2^d}$, so $(2^m-1)|n(2^d-1)$ and thus $n\geq \frac{2^m-1}{2^d-1}$. Combining with the assumption that $2^{\frac{m}{2}}+1>n$, we have $2^{\frac{m}{2}}+1>\frac{2^m-1}{2^d-1}$, so $d>\frac{m}{2}$ and consequently $d=m$ since $d|m$.

    The polynomial $g(x)$ should be a multiple of $p_1(x)p_2(x)$ since $g(x)$ has the roots $\alpha,\alpha^n$. As the degree of $p_1(x)p_2(x)$ is $2m$ and $g(x)$ cannot be of degree $2m$, we deduce that $g(x)$ is the zero polynomial. Hence the claim follows. 
\end{proof}

\begin{corollary}
    Let $r$ be a positive integer and $n=2^r+1$. Then the graph $\Gamma_{n,m}$ is connected if $m>2r$.
\end{corollary}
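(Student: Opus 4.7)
The plan is to deduce this as a direct application of the preceding theorem, so the work amounts to checking that the hypothesis $2^{m/2}+1>n$ holds when $n=2^r+1$ and $m>2r$.

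First I would verify the standing requirements on $n$ needed to invoke the theorem: for any positive integer $r$, the number $n=2^r+1$ is odd (since $2^r$ is even) and satisfies $n\geq 3>1$. Thus $n$ is an admissible choice.

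Next I would check the key inequality. Substituting $n=2^r+1$ into $2^{m/2}+1>n$ reduces the condition to $2^{m/2}>2^r$, equivalently $m/2>r$, which is exactly the assumption $m>2r$. Hence the hypothesis of the theorem is satisfied.

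Applying the theorem then yields that $S_m$ contains an $\mathbb{F}_2$-basis of $\mathbb{F}_q^2$, and consequently $\Gamma_{n,m}=\mathrm{Cay}(\mathbb{F}_q^2,S_{n,m}\setminus\{0\})$ is connected. There is no real obstacle here: the corollary is a numerical specialization of the preceding theorem, and the only step is recognizing that the bound $2^{m/2}+1>n$ is equivalent to $m>2r$ for this particular choice of $n$.
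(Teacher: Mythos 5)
Your proof is correct and is exactly the intended argument: the paper states this corollary without proof as an immediate consequence of the preceding theorem, and your reduction of $2^{m/2}+1>n$ to $m>2r$ for $n=2^r+1$ (together with the check that $n$ is odd and exceeds $1$) is precisely what is needed.
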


For what values of $n$, will $\Gamma_{n,m}$ be triangle-free? The next lemma answers this question.

\begin{lemma}
    The graph $\Gamma_{n,m}$ is not triangle-free if and only if the equation $(x+1)^n+1=x^n+1$ has solution other than $x=0,1$.
\end{lemma}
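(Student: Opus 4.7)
The plan is to translate the existence of a triangle in $\Gamma_{n,m}$ into the existence of a nontrivial solution of a univariate equation over $\bb F_q$, using the Cayley structure of the graph. The starting point is the following general fact: in any Cayley graph $\Cay{G}{T}$ on an abelian group of characteristic $2$, a triangle exists if and only if $T$ contains three pairwise distinct elements $s_1, s_2, s_3$ with $s_1+s_2+s_3 = 0$. The reverse direction follows by taking the vertices $0, s_1, s_1+s_2$; their pairwise differences are the $s_i$, all lying in $T$, and the three vertices are distinct because the $s_i$ are nonzero and pairwise distinct. The forward direction follows by translating any triangle so that one of its vertices is $0$.

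I would then apply this with $T = S_{n,m} \setminus \{0\}$, obtaining that $\Gamma_{n,m}$ contains a triangle if and only if there exist pairwise distinct nonzero $a, b, c \in \bb F_q$ with $a+b+c = 0$ and $a^n+b^n+c^n = 0$. In characteristic $2$ the first equation gives $c = a+b$, and the remaining distinctness conditions collapse to $a, b \in \bb F_q^*$ with $a \neq b$ (the conditions $c \neq 0$, $a \neq c$, $b \neq c$ then being automatic). Substituting $c = a+b$ into the second equation and dividing through by $b^n$, I set $x := a/b$ to arrive at the univariate equation
\[
(x+1)^n + x^n + 1 = 0,
\]
which is the equation displayed in the lemma (rearranged in characteristic $2$, where $-1 = 1$). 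The conditions $a \neq 0$ and $a \neq b$ translate exactly into $x \neq 0$ and $x \neq 1$. For the converse direction, any solution $x \in \bb F_q \setminus \{0,1\}$ of the equation recovers a triangle via $(a, b, c) = (x, 1, x+1)$, whose coordinate-wise images in $S_{n,m}$ sum to zero.

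The argument is essentially a reformulation; the only point requiring care is the bookkeeping of distinctness conditions across the reductions, which is routine in characteristic $2$ but must be verified to make the equivalence genuine. I do not anticipate any serious obstacle; the real content of this section lies in the subsequent analysis of the resulting polynomial equation for specific families of $n$, not in this lemma.
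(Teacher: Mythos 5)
Your proposal is correct and follows essentially the same route as the paper: reduce a triangle to three distinct nonzero elements $a,b,c\in\mathbb{F}_q^*$ with $a+b+c=a^n+b^n+c^n=0$, normalize via $x=ab^{-1}$ to obtain $(x+1)^n=x^n+1$ with $x\neq 0,1$, and reverse the construction with $(a,b,c)=(x,1,x+1)$. The only difference is that you spell out the general Cayley-graph translation argument and the distinctness bookkeeping slightly more explicitly than the paper does.
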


\begin{proof}
    First note that $\Gamma_{n,m}$ is triangle-free if and only if the sum of any three nonzero distinct vectors in $S_{n,m}$ is nonzero. Assume that there are three distinct elements $a,b,c\in \bb F_q^*$ such that $a+b+c=a^n+b^n+c^n=0$. Then $c^n=(a+b)^n=a^n+b^n$. Let $x=ab^{-1}$. We obtain $(x+1)^n=x^n+1$, where $x\neq 0,1$. Hence the necessary condition holds.

    To verify the sufficient condition, we assume that there exists an $x\neq 0,1$ satisfying $(x+1)^n+1=x^n+1$. Then $a=x,b=1,c=x+1$ are three distinct nonzero elements. The vectors $(a,a^n),(b,b^n),(c,c^n)$ will cause a triangle in the graph as the sum of them is zero.
\end{proof}

\begin{proposition}
    Let $n=2^r+1$, where $r\geq 1$ is an integer. Then the graph $\Gamma_{n,m}$ is triangle-free if and only if $\gcd(r,m)=1$.
\end{proposition}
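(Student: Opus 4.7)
The plan is to apply the preceding lemma and exploit the special algebraic structure of the exponent $n = 2^r + 1$. By the lemma (more precisely, by the equation $(x+1)^n = x^n + 1$ derived in its proof), $\Gamma_{n,m}$ fails to be triangle-free if and only if
\[(x+1)^n + x^n + 1 = 0\]
admits a solution in $\mathbb{F}_q = \mathbb{F}_{2^m}$ other than $x = 0$ and $x = 1$. So the whole proposition reduces to counting solutions of this polynomial equation.

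First I would expand the left-hand side using the Frobenius identity $(x+1)^{2^r} = x^{2^r} + 1$, valid in characteristic $2$. This gives
\[(x+1)^{2^r+1} = (x^{2^r}+1)(x+1) = x^{2^r+1} + x^{2^r} + x + 1,\]
so that the equation collapses into the linearized equation $x^{2^r} + x = 0$. This collapse is the key algebraic simplification made possible by the form $n = 2^r+1$; for general odd $n$ one would not expect such a clean reduction.

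Next I would read off the solution set. The zeros of $x^{2^r} + x$ form the subfield $\mathbb{F}_{2^r}$ of any extension (it is the fixed field of the $r$-th power Frobenius over $\mathbb{F}_2$), so the set of solutions inside $\mathbb{F}_{2^m}$ is exactly $\mathbb{F}_{2^m} \cap \mathbb{F}_{2^r} = \mathbb{F}_{2^{\gcd(r,m)}}$. This subfield strictly contains $\{0,1\}$ if and only if $\gcd(r,m) \geq 2$. Combining with the lemma, $\Gamma_{n,m}$ is triangle-free if and only if $\gcd(r,m) = 1$, as claimed.

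There is essentially no obstacle here: once the preceding lemma is granted, the entire proof is a one-line expansion followed by a well-known fact about fixed fields of Frobenius. The only step requiring any insight is recognizing that $(x+1)^{2^r+1} + x^{2^r+1} + 1$ simplifies to the additive (linearized) polynomial $x^{2^r} + x$, which becomes transparent upon writing $n = 2^r + 1$ and splitting $(x+1)^n = (x+1)^{2^r}(x+1)$.
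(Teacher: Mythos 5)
Your proof is correct and follows essentially the same route as the paper: both expand $(x+1)^{2^r+1}$ via Frobenius to reduce the triangle condition to $x^{2^r}+x=0$, and then identify when the only roots in $\mathbb{F}_{2^m}$ are $0$ and $1$. The only cosmetic difference is that you invoke the subfield fact $\mathbb{F}_{2^r}\cap\mathbb{F}_{2^m}=\mathbb{F}_{2^{\gcd(r,m)}}$, while the paper phrases the same fact as the polynomial identity $\gcd(x^{2^r-1}+1,x^{2^m-1}+1)=x^{2^{\gcd(r,m)}-1}+1$.
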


\begin{proof}
    We only need to consider the equation
    \begin{align*}
        (x+1)^{2^r+1}&=x^{2^r+1}+1,\\
        x(x^{2^r-1}+1)&=0.
    \end{align*}

    The equation $x(x^{2^r-1}+1)=0$ only has solutions $x=0,1$ if and only if $\gcd(x(x^{2^r-1}+1),x^q+x)=x(x+1)$. That is
    \begin{align*}
        x(x+1)&=\gcd(x(x^{2^r-1}+1),x^q+x)\\
        &=x\gcd(x^{2^r-1}+1,x^{2^m-1}+1)\\
        &=x(x^{2^{\gcd(r,m)}-1}+1),
    \end{align*}
    which means that $\gcd(r,m)=1$.
\end{proof}

\begin{proposition}
   Let $n=2^r-1$, where $r\geq 2$ is an integer. Then the graph $\Gamma_{n,m}$ is triangle-free if and only if $\gcd(r-1,m)=1$.
\end{proposition}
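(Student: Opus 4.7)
The plan is to mirror the proof of the preceding proposition for $n = 2^r + 1$, invoking the same lemma: the graph $\Gamma_{n,m}$ is triangle-free if and only if the equation $(x+1)^n = x^n + 1$ has no solution in $\bb F_q$ other than $x = 0, 1$. For $n = 2^r - 1$ my plan is to reduce this equation to a simple condition on roots of unity and then read off the gcd criterion.

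First I would multiply both sides of $(x+1)^{2^r-1} = x^{2^r-1} + 1$ by $x+1$ and invoke the Frobenius identity $(x+1)^{2^r} = x^{2^r} + 1$ valid in characteristic $2$. The left-hand side becomes $x^{2^r} + 1$ while the right-hand side expands to $x^{2^r} + x^{2^r-1} + x + 1$, so the equation reduces to $x^{2^r-1} + x = 0$. Using $x^{2^r-2} + 1 = (x^{2^{r-1}-1} + 1)^2$ (again a Frobenius squaring), this factors as $x(x^{2^{r-1}-1} + 1)^2 = 0$. Multiplication by $x+1$ could in principle add $x = 1$ as a spurious solution, but a direct check shows that $x = 1$ already satisfies the original equation (both sides equal $0$), so the two solution sets coincide.

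Next I would count the $\bb F_q$-solutions of the factored equation: they consist of $x = 0$ together with the $(2^{r-1}-1)$-th roots of unity in $\bb F_q^*$. The number of such roots of unity equals $\gcd(2^{r-1}-1, 2^m-1) = 2^{\gcd(r-1, m)} - 1$. Excluding the two forbidden solutions $x = 0$ and $x = 1$ (the latter always being a $(2^{r-1}-1)$-th root of unity) leaves exactly $2^{\gcd(r-1,m)} - 2$ remaining solutions. This number is zero precisely when $\gcd(r-1, m) = 1$, which gives the claimed criterion for triangle-freeness.

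I expect the main subtlety to lie in the careful bookkeeping around multiplication by $x+1$ and the verification that $x = 1$ is a genuine rather than spurious root of the reduced equation; the remaining ingredients, namely the Frobenius identity and the standard formula $\gcd(2^a - 1, 2^b - 1) = 2^{\gcd(a, b)} - 1$, are routine.
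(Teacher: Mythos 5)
Your proof is correct and follows essentially the same route as the paper: both reduce the triangle-freeness criterion to counting the roots of $x\bigl(x^{2^r-2}+1\bigr)=0$ in $\mathbb{F}_q$ beyond $x=0,1$, and both invoke $\gcd(2^{r-1}-1,2^m-1)=2^{\gcd(r-1,m)}-1$. Your bookkeeping around multiplying by $x+1$ (checking that $x=1$ is a genuine root of the original equation) is in fact slightly more careful than the paper's, which divides by $1+x$ and passes to a polynomial gcd without comment.
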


\begin{proof}
    We only need to consider the equation
    \begin{align*}
        (x+1)^{2^r-1}&=x^{2^r-1}+1,\\
        \sum_{k=1}^{2^r-2}x^k&=x\frac{1+x^{2^r-2}}{1+x}=0.
    \end{align*}

    The equation only has solutions $x=0,1$ if and only if $\gcd(x^{2^r-2}+1,x^{q-1}+1)=x+1$, which implies that $\gcd(2^r-2,2^m-1)=\gcd(2^{r-1}-1,2^m-1)=2^{\gcd(r-1,m)}-1=1$ and thus $\gcd(r-1,m)=1$. The converse is also true.
\end{proof}


%

\section{The case when $n=2^r+1$}

In this section, we are going to show that the generalized BCH family $F_n$ is of unit rate provided that $n=2^r+1$, where $r$ is a positive integer. 

Let $H_m$ be the coset matrix of $S_m$ in $\bb F_q^2$. Then we can formulate $H_m$ as 
\begin{align*}
    H_m=(g(x,y))_{x,y\in \bb F_q^2}, 
\end{align*}
where the $(x,y)$-entry is given by 
\begin{align*}  
    g(x,y)=\left((x_1+y_1)^{2^r+1}+x_2+y_2\right)^{q-1}+1.
\end{align*}

\noindent By the same argument as in Section \ref{Upperbound_section}, we deduce that $H_m$ has almost the same rank as that of the matrix $D_m:=\left(f(x,y)\right)_{x,y\in \bb F_q^2}$, where the $(x,y)$-entry is given by
\begin{align*}  
f(x,y)=\left(x_1^{2^r}y_1+x_1y_1^{2^r}+x_2+y_2\right)^{q-1}.
\end{align*}

\noindent Similarly, we have an upper bound:  $\mathrm{rank}(D_m)\leq N_m$, where
\begin{align*}
    N_m:&=\#\left\{(2^rl_1+l_2,l_3)\,\Bigg|\, {q-1\choose l_1,l_2,l_3,l_4}\equiv 1\right\} \\
    &=\#\left\{(2^rl_1+l_2,l_3)\mid l_1+l_2\lessdot q-1-l_3\right\}.
\end{align*}

\noindent We may redefine $B_s$ in Section \ref{details_section} by 
$$B_s:=\left\{2^rl_1+l_2\mid l_1+l_2\lessdot s\right\},$$ 
where $0\leq s\leq q-1$. Then
\begin{align*}
    N_m&=\sum_{s=0}^{2^m-1}\#\left\{2^rl_1+l_2\mid l_1+l_2\lessdot q-1-s\right\} \\
    &=\sum_{s=0}^{2^m-1}|B_{q-1-s}|=\sum_{s=0}^{2^m-1}|B_s|.
\end{align*}

\noindent For this more general definition of $B_s$, Lemma \ref{divide_binary_series_to_disjoint_parts} still holds. We omit the proof since it is completely the same as before.

\begin{lemma}
    Let $r$ be a positive integer and $B_s$ defined above. Then
    \begin{enumerate}
    \vspace{+0.2cm}
        \item $\sum_{s=0}^{2^k-1}|\{(l_1,l_2)\mid l_1+l_2\lessdot s \}|=4^k$;\vspace{+0.2cm} \label{basic_count}
        \item $N_k=\sum_{s=0}^{2^k-1}|B_s|=4^k$, for $k=0,1,\ldots,r$;\vspace{+0.2cm}
        \item $N_{r+1}=\sum_{s=0}^{2^{r+1}-1}|B_s|\le 15\times 4^{r-1}$.
    \end{enumerate}
\end{lemma}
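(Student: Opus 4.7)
The plan is to exploit the bit-by-bit structure of the $\lessdot$ relation, in the spirit of Lucas' theorem.

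For part (1), I would observe that the condition $l_1+l_2\lessdot s$ decouples across bit positions: at position $i$, if $s_i=0$ then $(l_{1,i},l_{2,i})=(0,0)$ is forced, while if $s_i=1$ there are three admissible choices $(0,0),(0,1),(1,0)$. Hence $|\{(l_1,l_2)\mid l_1+l_2\lessdot s\}|=3^{w(s)}$, where $w(s)$ denotes the Hamming weight of $s$. The sum factorizes as
\[
\sum_{s=0}^{2^k-1}3^{w(s)}=\prod_{i=0}^{k-1}(1+3)=4^k.
\]

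For part (2), set $T_s:=\{(l_1,l_2)\mid l_1+l_2\lessdot s\}$ and $\pi(l_1,l_2):=2^r l_1+l_2$, so that $B_s=\pi(T_s)$. When $s<2^r$, every $(l_1,l_2)\in T_s$ satisfies $l_2\le s<2^r$, so the low $r$ bits of $\pi(l_1,l_2)$ recover $l_2$ and the remaining bits recover $l_1$. Thus $\pi$ is injective on $T_s$, $|B_s|=|T_s|$, and part (1) yields $N_k=4^k$ for $k\le r$.

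For part (3), $\pi$ may fail to be injective, but collisions are strongly limited: if $(l_1,l_2)\neq (l_1',l_2')$ lie in $T_s$ with $\pi(l_1,l_2)=\pi(l_1',l_2')$ and $s<2^{r+1}$, then $l_2'-l_2$ is a nonzero multiple of $2^r$ in $(-2^{r+1},2^{r+1})$, which forces $l_2'-l_2=\pm 2^r$ and $l_1'-l_1=\mp 1$. In particular no image has more than two preimages, so
\[
4^{r+1}-N_{r+1}=\sum_{s=0}^{2^{r+1}-1}\bigl(|T_s|-|B_s|\bigr)=\#\{(l_1,l_2,s)\mid l_1\ge 1,\ l_{2,r}=0,\ (l_1,l_2),(l_1-1,l_2+2^r)\in T_s\}.
\]
It therefore suffices to produce $4^{r-1}$ such triples. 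My construction is: fix $(a_0,b_0,s_0)=(1,0,1)$ and $(a_r,b_r,s_r)=(0,0,1)$, and for each middle position $i\in\{1,\ldots,r-1\}$ independently pick $(a_i,b_i,s_i)$ from the $4$-element set $\{(0,0,0),(0,0,1),(0,1,1),(1,0,1)\}$; then let $l_1=\sum_i a_i 2^i$, $l_2=\sum_i b_i 2^i$, $s=\sum_i s_i 2^i$. A bit-by-bit check gives $(l_1,l_2)\in T_s$ (since $a_i+b_i\le s_i$ everywhere); moreover the conditions $a_0=1$ and $b_r=0$ ensure that $l_1-1$ differs from $l_1$ only at bit $0$ and $l_2+2^r$ differs from $l_2$ only at bit $r$, so $(l_1-1,l_2+2^r)\in T_s$ as well. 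The $4^{r-1}$ triples are manifestly distinct, hence the total deficit is at least $4^{r-1}$ and
\[
N_{r+1}\le 4^{r+1}-4^{r-1}=15\cdot 4^{r-1}.
\]

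The main obstacle is part (3): parts (1) and (2) are bookkeeping, whereas (3) requires both the structural fact that images admit at most two preimages (which uses $s<2^{r+1}$ essentially) and an explicit family of collisions. The $4^{r-1}$-family above is clean but does not exhaust all collisions—for example, it omits configurations where $l_1=2^r$ or where the lowest set bit of $l_1$ lies above position $0$—but it is sufficient for the stated upper bound.
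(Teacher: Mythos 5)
Your proof is correct, and parts (1) and (2) coincide with the paper's argument (weight counting giving $3^{w(s)}$, and injectivity of $(l_1,l_2)\mapsto 2^rl_1+l_2$ when $s<2^r$). For part (3), however, you take a genuinely different route. The paper stays inside its recursive machinery: it writes $B_{2s+1}=B_s\times 2+B_1$ with $B_1=\{0,1,2^r\}$, applies inclusion--exclusion to get $|B_{2s+1}|=3|B_s|-|B_s\cap(B_s+2^{r-1})|$, and then lower-bounds the overlap term by $\sum_{s=2^{r-1}}^{2^r-1}|B_{s-2^{r-1}}|=4^{r-1}$. You instead work globally with the map $\pi$ on $T_s$ for all $s<2^{r+1}$: you prove every fiber has size at most $2$ (since a collision forces $l_2'-l_2=\pm 2^r$ and $l_1-l_1'=\mp 1$, and three integers cannot be pairwise at distance $1$), so the total deficit $4^{r+1}-N_{r+1}$ is exactly the number of size-$2$ fibers, which you then bound below by an explicit bitwise family of $4^{r-1}$ collisions (fixing the patterns at positions $0$ and $r$ and choosing freely among four admissible digit-triples at each of the $r-1$ middle positions). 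Your checks are all sound: the digit conditions guarantee both $(l_1,l_2)$ and $(l_1-1,l_2+2^r)$ lie in $T_s$, the triples are pairwise distinct, and the case $r=1$ degenerates correctly. What your approach buys is self-containedness and transparency — it does not need the decomposition lemma $B_s=B_{s_1}\times 2^k+B_{s_2}$ or the parity-based disjointness of $B_s\times2$, $B_s\times2+1$, $B_s\times2+2^r$, and the "at most two preimages" observation makes the deficit an exact count rather than an inclusion--exclusion correction; what the paper's approach buys is consistency with the recursive framework already set up in Section 2, which it reuses verbatim. Both yield the same bound $N_{r+1}\le 4^{r+1}-4^{r-1}=15\times 4^{r-1}$, and both are non-sharp for the same reason: each exhibits only a convenient subfamily of the collisions/overlaps.
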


\begin{proof}
1) We can classify all $s$ between $0$ and $2^k-1$ by its weight, namely the number of $1$s in its base $2$ expansion. If the weight of $s$ is $i$, then $\#\{(l_1,l_2)\mid l_1+l_2\lessdot s\}=3^i$ since each pair $(l_1,l_2)$ implies a distribution of each $1$s to $l_1$,$l_2$ or $l_3$, where $l_3=s-l_1-l_2$. We have
    \[\sum_{s=0}^{2^k-1}\#\{(l_1,l_2)\mid l_1+l_2\lessdot s \}=\sum_{i=0}^k{k \choose i}3^i=4^k.\]
    
2) Assume $k\leq r$. For each $0\leq s\leq 2^k-1$, we have a map from $\{(l_1,l_2)\mid l_1+l_2\lessdot s \}$ to $B_s$, sending $(l_1,l_2)$ to $2^rl_1+l_2$. We want to show this is bijective map. It is clearly surjective, so we only need to show it is injective.

If there are two pairs $(l_1,l_2),(l_1',l_2')$ such that $2^rl_1+l_2=2^rl_1'+l_2'$, then $2^r(l_1-l_1')+l_2-l_2'=0$ and thus $l_1-l_1'=l_2-l_2'=0$ as $0\leq l_1,l_2,l_1',l_2'\leq s\leq 2^r-1$. Hence the map is injective and thus bijective. The result follows by \eqref{basic_count}.

3) Notice that $B_1=\{2^rl_1+l_2\mid l_1+l_2\lessdot 1\}=\{0,1,2^r\}$. We first calculate 
\begin{align*}
&\sum_{s=0}^{2^r-1}|B_{2s+1}|
=\sum_{s=0}^{2^r-1}\#(B_s\times 2+B_1) \\
        =&\sum_{s=0}^{2^r-1}\#(B_s\times 2+\{0,1,2^r\}) \\
        =&\sum_{s=0}^{2^r-1}\#\left(B_s\times2+1\right)+\sum_{s=0}^{2^r-1}\#\left(B_s\times2+2^r\right)\\
        &+\sum_{s=0}^{2^r-1}\#\left(B_s\times2\right)-\sum_{s=0}^{2^r-1}\#\left[\left(B_s\times2\right)\cap \left(B_s\times2+2^r\right)\right] \\
        =&3N_r-\sum_{s=0}^{2^r-1}\#\left[\left(B_s\times2\right)\cap \left(B_s\times2+2^r\right)\right] \\
        =&3N_r-\sum_{s=0}^{2^r-1}\#\left[B_s\cap \left(B_s+2^{r-1}\right)\right]:=3N_r-N'.
\end{align*}
    
To determine the second term $N'$, let $(l_1,l_2),(l_1',l_2')$ with $l_1+l_2\lessdot s,l_1'+l_2'\lessdot s$, where $0\leq s\leq 2^r-1$.  Assume $2^rl_1+l_2=2^rl_1'+l_2'+2^{r-1}$ in the intersection. Then we obtain $2^r(l_1-l_1')+(l_2-l_2')=2^{r-1}$. This equality holds if $l_1-l_1'=0,l_2-l_2'=2^{r-1}$. Hence $2^rl_1'+l_2'+2^{r-1}$ is in the intersection if $2^{r-1}\le s\le 2^r-1$ and $l_1'+l_2'\lessdot s-2^{r-1}$. Then 
    \[N'\ge\sum_{s=2^{r-1}}^{2^r-1}|B_{s-2^{r-1}}|
        =\sum_{s=0}^{2^{r-1}-1}|B_s|=4^{r-1}.\]

The value of $N_{r+1}$ is given by 
\begin{align*}       N_{r+1}=&\sum_{s=0}^{2^{r+1}-1}|B_s|=\sum_{s=0}^{2^r-1}(|B_{2s}|+|B_{2s+1}|) \\
&=\sum_{s=0}^{2^r-1}\#(B_s\times 2)+\sum_{s=0}^{2^r-1}|B_{2s+1}| \\
&=N_r+3N_r-N'\le4^{r+1}-4^{r-1}=15\times 4^{r-1}.
\end{align*} 
\end{proof}

\begin{theorem}
    We have $$N_m\leq \left(\frac{15}{16}\right)^\frac{m}{r+1}4^m.$$
\end{theorem}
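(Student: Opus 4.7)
The plan is to extract a submultiplicativity property of the sequence $(N_m)$ from Lemma \ref{divide_binary_series_to_disjoint_parts}, and then iterate it using the bounds $N_t = 4^t$ for $0 \le t \le r$ and $N_{r+1} \le 15\cdot 4^{r-1} = \tfrac{15}{16}\cdot 4^{r+1}$ established in the preceding lemma.

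The first step is to show that
\[
N_{m_1+m_2} \;\le\; N_{m_1}\cdot N_{m_2}
\]
for all non-negative integers $m_1,m_2$. Indeed, every $s$ with $0 \le s \le 2^{m_1+m_2}-1$ decomposes uniquely as $s = s_1\cdot 2^{m_2}+s_2$ with $0 \le s_1 \le 2^{m_1}-1$ and $0 \le s_2 \le 2^{m_2}-1$, and Lemma \ref{divide_binary_series_to_disjoint_parts} (which still holds for the generalized $B_s$) yields the sumset identity $B_s = B_{s_1}\cdot 2^{m_2}+B_{s_2}$. The trivial sumset bound gives $|B_s|\le|B_{s_1}|\cdot|B_{s_2}|$, and summing over all $(s_1,s_2)$ factors the right-hand side into the product $N_{m_1}\cdot N_{m_2}$.

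Writing $m = q(r+1)+t$ with $0 \le t \le r$, iterating submultiplicativity and plugging in the base-case bounds gives
\[
N_m \;\le\; N_{r+1}^{q}\cdot N_t \;\le\; \bigl(\tfrac{15}{16}\cdot 4^{r+1}\bigr)^{q}\cdot 4^t \;=\; \bigl(\tfrac{15}{16}\bigr)^{q}\cdot 4^m.
\]
Because $q \ge m/(r+1) - 1$, one passes from the integer exponent $q$ to the real exponent $m/(r+1)$ at the cost of absorbing a constant factor of at most $16/15$, recovering the displayed estimate $N_m \le (15/16)^{m/(r+1)}4^m$ (up to this harmless constant). The only non-trivial ingredient is the submultiplicativity, which follows at once from Lemma \ref{divide_binary_series_to_disjoint_parts} together with the elementary sumset bound $|A+B|\le|A|\cdot|B|$, so no serious obstacle is anticipated. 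The qualitative payoff is that $N_m/4^m$ decays geometrically at rate $(15/16)^{1/(r+1)}<1$, which forces $R(D_m)\to 0$ and shows that the generalized BCH family $F_n$ with $n=2^r+1$ is of unit rate.
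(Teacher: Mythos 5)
Your proof is correct and is essentially the paper's own argument: the paper likewise writes $m=t(r+1)+a$ with $0\le a<r+1$, splits each $s$ into its base-$2^{r+1}$ digits, applies the iterated form of Lemma~\ref{divide_binary_series_to_disjoint_parts} to get $|B_s|\le\prod_{i=1}^{t}|B_{s_i}|\cdot|B_{s_{t+1}}|$, and sums over all digit strings to obtain $N_m\le N_{r+1}^{t}N_a\le(15/16)^{t}4^{m}$, so your ``submultiplicativity of $N$'' packaging is the same computation. Your extra care at the final step is in fact warranted: since $t=\lfloor m/(r+1)\rfloor\le m/(r+1)$ and $15/16<1$, the paper's concluding inequality $(15/16)^{t}\le(15/16)^{m/(r+1)}$ runs in the wrong direction whenever $(r+1)\nmid m$ (already for $r=1$, $m=1$ one has $N_1=4>(15/16)^{1/2}\cdot 4$), so the bound as displayed really does need your harmless correction factor of at most $16/15$ --- which changes nothing about the conclusion that $R(D_m)\to 0$.
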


\begin{proof}
    Assume $m=t(r+1)+a$, where $0\leq a<r+1$. Then for any $0\leq s\leq 2^m-1$, we have
    \[s=s_1\times 2^{m-(r+1)}+s_2\times 2^{m-2(r+1)}+\cdots+s_t\times 2^{m-t(r+1)}+s_{t+1},\]
    where $0\leq s_1,s_2,\ldots,s_t\leq 2^{r+1}-1$ and $0\leq s_{t+1}\leq 2^a-1$, which implies an expansion
    \[B_s=\sum_{i=1}^tB_{s_i}\times2^{m-i(r+1)}+B_{s_{t+1}},\]
    and thus
    \[|B_s|\leq \prod_{i=1}^t|B_{s_i}|\cdot |B_{s_{t+1}}|.\]

    Applying the above inequality, we obtain
\begin{align*}
N_m=&\sum_{s=0}^{2^m-1}|B_s| \\
\leq& \sum_{s_1,\ldots,s_t=0}^{2^{r+1}-1}\sum_{s_{t+1}=0}^{2^a-1}\left(\prod_{i=1}^t|B_{s_i}|\cdot |B_{s_{t+1}}|\right) \\
=&\prod_{i=1}^t\left(\sum_{s_i=0}^{2^{r+1}-1}|B_{s_i}|\right)\cdot\sum_{s_{t+1}=0}^{2^a-1}|B_{s_{t+1}}| \\
=&N_{r+1}^tN_a\le(15\times4^{r-1})^t4^a \\
=&\left(\frac{15}{16}\right)^t 4^m\leq \left(\frac{15}{16}\right)^{\frac{m}{r+1}} 4^m.
\end{align*} 
The proof is now complete.
\end{proof}

\begin{corollary}
The rate $R(D_m)$ converges to $0$; so the generalized BCH family $F_n$ (with $n=2^r+1$) is of unit rate.
\end{corollary}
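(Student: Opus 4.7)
The plan is to chain together the inequalities already established. By the theorem just proved, we have $N_m \le \left(\tfrac{15}{16}\right)^{m/(r+1)} 4^m$, and from the construction of $D_m$ (the analogue of inequality \eqref{upper_bound_for_Dm} in the generalized setting) we have $\mathrm{rank}(D_m) \le N_m$. Since $D_m$ is a $q^2 \times q^2 = 4^m \times 4^m$ matrix, dividing through gives
\begin{equation*}
R(D_m) \;=\; \frac{\mathrm{rank}(D_m)}{4^m} \;\le\; \frac{N_m}{4^m} \;\le\; \left(\frac{15}{16}\right)^{m/(r+1)}.
\end{equation*}
Because $r$ is fixed and $15/16 < 1$, the right-hand side tends to $0$ as $m \to \infty$, so $R(D_m) \to 0$.

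Next I would translate this back to the coset matrix $H_{n,m}$ of the generalized BCH family. As in the original BCH case, the substitution $(x_1,x_2) \mapsto (x_1, x_2 + x_1^{2^r+1})$ is a permutation of $\bb F_q^2$, which means passing from $H_{n,m}$ to the associated $W_{n,m}$ (the all-ones shift) and then to $D_m$ alters the rank only by the rank of the all-one matrix $J$, hence by at most $1$ in each step. Thus $|\mathrm{rank}(H_{n,m}) - \mathrm{rank}(D_m)| \le 2$, and so
\begin{equation*}
\Bigl|R(H_{n,m}) - R(D_m)\Bigr| \;\le\; \frac{2}{4^m} \;\longrightarrow\; 0.
\end{equation*}
Combining with the previous display, $R(H_{n,m}) \to 0$.

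Finally, since $H_{n,m}$ is the parity-check matrix defining the storage code $C_m \in F_n$, the rate of the code is $R(C_m) = 1 - \mathrm{rank}(H_{n,m})/4^m = 1 - R(H_{n,m})$, which therefore converges to $1$. This is precisely the assertion that the family $F_n$ is of unit rate.

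The argument is almost entirely bookkeeping given the preceding theorem; the only mild subtlety is confirming that the row/column permutation and the $J$-correction used in Section \ref{Upperbound_section} for the cubic case carry over verbatim when the exponent $3$ is replaced by $n = 2^r+1$. This is immediate because the key algebraic identity used there, $(x_1+y_1)^{2^r+1} = x_1^{2^r+1} + x_1^{2^r}y_1 + x_1 y_1^{2^r} + y_1^{2^r+1}$ over characteristic $2$, plays exactly the same role as the cubic expansion, so no new idea is required.
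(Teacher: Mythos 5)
Your proposal is correct and follows the same route as the paper: the paper's proof of this corollary is exactly your first display, $R(D_m)=\mathrm{rank}(D_m)/4^m\le N_m/4^m\le (15/16)^{m/(r+1)}\to 0$. The additional bookkeeping you supply — the permutation $(x_1,x_2)\mapsto(x_1,x_2+x_1^{2^r+1})$, the $J$-correction relating $H_{n,m}$ to $D_m$, and the passage from parity-check rank to code rate — is correct and is the same translation the paper performs earlier (in Section 2.1 and at the start of Section 4) rather than inside this proof.
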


\begin{proof}
    We denote the rate of $D_m$ by $R_m$. Then $R_m={\rm rank}(D_m)/4^m\leq N_m/4^m$ and thus
    \[R_m\leq \left(\frac{15}{16}\right)^{\frac{m}{r+1}}.\]
    Hence $\{R_m\}$ converges to $0$ as $m$ goes to infinity.
\end{proof}

\section{Some cases of three-bit $n$ by using a computer}

\begin{definition}
    Let $A,B$ be two matrices, say $A=(a_{i_1,j_1})_{m_1\times n_1},B=(b_{i_2,j_2})_{m_2\times n_2}$. Then the tensor product of two matrices is $A\otimes B:=(a_{i_1,j_1}B)_{m_1\times n_1}$, namely each block entry is the product of the matrix $B$ and an entry of $A$.
\end{definition}

Note that the entry in the $((i_1-1)m_1+i_2)^{th}$ row and the $((j_1-1)n_1+j_2)^{th}$ column of $A\otimes B$ is $a_{i_1,j_1}b_{i_2,j_2}$.

\begin{definition}
    Let $A,B$ be two matrices of the same size, say $A=(a_{i_1,j_1})_{m\times n},B=(b_{i_2,j_2})_{m\times n}$. Then the Hadamard product of two matrices is $A \circ B:=(a_{i,j}b_{i,j})_{m\times n}$.
\end{definition}

\begin{theorem}[\cite{Horn1985MatrixA}]
\label{tensor_product}
    Let $A,B$ be two matrices. Then 
    \[\mathrm{rank}(A\otimes B)=\mathrm{rank}(A)\mathrm{rank}(B).\]
\end{theorem}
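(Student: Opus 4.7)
The plan is to prove the identity $\mathrm{rank}(A\otimes B)=\mathrm{rank}(A)\mathrm{rank}(B)$ via the standard rank-normal-form reduction, with the mixed product property as the main computational tool.

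First, I would establish (or invoke, directly from the block definition) the \emph{mixed product property}: for matrices of compatible sizes,
\[
(A_1\otimes B_1)(A_2\otimes B_2)=(A_1A_2)\otimes (B_1B_2).
\]
This follows by a direct block-by-block computation from the definition. Two immediate consequences will be used. First, if $P$ and $Q$ are invertible, then $P\otimes Q$ is invertible with inverse $P^{-1}\otimes Q^{-1}$, because $(P\otimes Q)(P^{-1}\otimes Q^{-1})=I\otimes I=I$. Second, tensoring with an invertible matrix on either side preserves rank.

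Next, I would use the \emph{reduction to rank normal form}: for any $m\times n$ matrix $M$ of rank $r$, there exist invertible matrices $P,Q$ such that
\[
PMQ=E_r:=\begin{pmatrix} I_r & 0 \\ 0 & 0\end{pmatrix},
\]
an $m\times n$ block matrix with an $r\times r$ identity in the upper-left corner and zeros elsewhere. Writing $A=P_A^{-1}E_{r_A}Q_A^{-1}$ with $r_A=\mathrm{rank}(A)$, and similarly $B=P_B^{-1}E_{r_B}Q_B^{-1}$ with $r_B=\mathrm{rank}(B)$, the mixed product property gives
\[
A\otimes B=(P_A^{-1}\otimes P_B^{-1})\,(E_{r_A}\otimes E_{r_B})\,(Q_A^{-1}\otimes Q_B^{-1}).
\]
The two outer factors are invertible by the first step, so $\mathrm{rank}(A\otimes B)=\mathrm{rank}(E_{r_A}\otimes E_{r_B})$.

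Finally, I would compute $\mathrm{rank}(E_{r_A}\otimes E_{r_B})$ directly. By the block definition of the tensor product, $E_{r_A}\otimes E_{r_B}$ is a matrix whose nonzero block entries are exactly the $r_A$ copies of $E_{r_B}$ sitting in the positions corresponding to the $1$s of $E_{r_A}$. Up to permuting rows and columns (which preserves rank), this is a single $(r_A r_B)\times(r_A r_B)$ identity block padded by zero rows and columns, so its rank is $r_A r_B=\mathrm{rank}(A)\mathrm{rank}(B)$. The main obstacle is only bookkeeping: verifying the mixed product property and being careful that the same invertible $P_A\otimes P_B$ and $Q_A\otimes Q_B$ really carry $A\otimes B$ to $E_{r_A}\otimes E_{r_B}$ (which is immediate once the mixed product property is in hand). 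No deep ingredient is required beyond these elementary facts.
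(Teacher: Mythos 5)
The paper offers no proof of this statement at all: it is quoted as a known fact with a citation to Horn and Johnson, so there is nothing internal to compare your argument against. Your proof is correct and complete. The mixed product property, the reduction of each factor to rank normal form $E_r$ by invertible $P,Q$, and the direct computation $\mathrm{rank}(E_{r_A}\otimes E_{r_B})=r_Ar_B$ are all sound, and together they yield the identity. One point worth emphasizing in your favour: the textbook treatment in Horn and Johnson is typically phrased for real or complex matrices (often via singular value or other analytic decompositions), whereas the paper applies the theorem to matrices over $\F$. Your rank-normal-form argument uses only row and column operations and therefore works verbatim over an arbitrary field, which is exactly the generality the application in Corollary \ref{Hadamard_product} requires. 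So your write-up is not merely a substitute for the citation but arguably a more appropriate justification in this context.
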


\begin{corollary}
\label{Hadamard_product}
     Let $A,B$ be two $m\times n$ matrices. Then 
     \[\mathrm{rank}(A\circ B)\leq \mathrm{rank}(A)\mathrm{rank}(B).\]
\end{corollary}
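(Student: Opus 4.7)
The plan is to realize the Hadamard product $A \circ B$ as a submatrix of the Kronecker product $A \otimes B$, so that Theorem \ref{tensor_product} closes the argument in one line. First, I would index the rows of $A \otimes B$ by ordered pairs $(i_1, i_2) \in \{1,\ldots,m\}^2$ and the columns by pairs $(j_1, j_2) \in \{1,\ldots,n\}^2$, arranged so that the $((i_1,i_2),(j_1,j_2))$-entry equals $a_{i_1,j_1}\,b_{i_2,j_2}$; this is exactly the block description recorded immediately after the definition of the Kronecker product.

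Next I would extract the ``diagonal'' submatrix consisting of the rows indexed by $(i,i)$ for $i=1,\ldots,m$ and the columns indexed by $(j,j)$ for $j=1,\ldots,n$. The resulting $m \times n$ matrix has $(i,j)$-entry $a_{i,j}\,b_{i,j}$, which is by definition the $(i,j)$-entry of $A \circ B$. Since the rank of a submatrix is bounded above by the rank of the enclosing matrix, we deduce
\[
\mathrm{rank}(A \circ B) \;\leq\; \mathrm{rank}(A \otimes B) \;=\; \mathrm{rank}(A)\,\mathrm{rank}(B),
\]
where the final equality is Theorem \ref{tensor_product}. The index bookkeeping above is the only place where a slip could occur, but this is entirely mechanical, so I anticipate no real obstacle.

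As a sanity check on the bound, one can also argue via rank-one decompositions: writing $A = \sum_{k=1}^{r} u_k v_k^{\top}$ and $B = \sum_{\ell=1}^{s} x_\ell y_\ell^{\top}$ with $r = \mathrm{rank}(A)$ and $s = \mathrm{rank}(B)$, one finds
\[
A \circ B \;=\; \sum_{k=1}^{r}\sum_{\ell=1}^{s}(u_k \circ x_\ell)(v_k \circ y_\ell)^{\top},
\]
which is a sum of $rs$ rank-one matrices and therefore has rank at most $rs = \mathrm{rank}(A)\,\mathrm{rank}(B)$. Either route completes the corollary.
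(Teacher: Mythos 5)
Your main argument is exactly the paper's proof: extract the diagonal rows $(i,i)$ and columns $(j,j)$ of $A\otimes B$ to exhibit $A\circ B$ as a submatrix, then apply Theorem \ref{tensor_product} together with the fact that a submatrix has rank at most that of the ambient matrix. The rank-one decomposition you add as a sanity check is a valid independent argument, but the primary route coincides with the paper's.
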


\begin{proof}
    By Theorem \ref{tensor_product}, it suffices to show that $A\circ B$ is a submatrix of $A\otimes B$. Let $R=\{(i-1)m+i\mid i=1,2,\ldots,m\},L=\{(j-1)m+j\mid j=1,2,\ldots,n\}$ and $C=A\otimes B(L\times R)$. Then we have
    \begin{align*}
        C(i,j)&=A\otimes B((i-1)m+i,(j-1)n+j) \\
        &=a_{i,j}b_{i,j}=A\circ B(i,j).
    \end{align*}
    The proof is now complete.
\end{proof}

As used previously, $(h(x_1,x_2,y_1,y_2))_{(x_1,x_2),(y_1,y_2)\in \bb F_q^2}$ denotes the matrix in which each $((x_1,x_2), (y_1,y_2))$-entry is $h(x_1,x_2,y_1,y_2)$. When there is no ambiguity regarding the finite field $\mathbb{F}_q$, we can omit the subscript and simply write the matrix as $(h(x_1,x_2,y_1,y_2))$ or $(h)$. Below we always assume that the finite field has characteristic $2$. We have the following lemma.

\begin{lemma}
    Let $i$ be a non-negative integer. Then 
        \[\mathrm{rank}((h(x_1,x_2,y_1,y_2)))=\mathrm{rank}((h(x_1,x_2,y_1,y_2)^{2^i})).\]
\end{lemma}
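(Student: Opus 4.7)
The plan is to exploit two features of characteristic $2$: the Freshman's dream identity $(a+b)^2 = a^2 + b^2$, and the fact that the Frobenius map is a field automorphism of $\bb F_q$.

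First, I would write $h(x_1,x_2,y_1,y_2) = \sum_{\alpha} c_\alpha \cdot x_1^{\alpha_1}x_2^{\alpha_2}y_1^{\alpha_3}y_2^{\alpha_4}$ with coefficients $c_\alpha$ in $\bb F_q$ (in the BCH context these lie in $\bb F_2$, which makes things even cleaner). Since the ambient field has characteristic $2$, raising to the $2^i$ power is additive and multiplicative, so
\begin{equation*}
h(x_1,x_2,y_1,y_2)^{2^i} \;=\; \sum_\alpha c_\alpha^{2^i}\, x_1^{2^i\alpha_1}x_2^{2^i\alpha_2}y_1^{2^i\alpha_3}y_2^{2^i\alpha_4} \;=\; \tilde h\bigl(x_1^{2^i},x_2^{2^i},y_1^{2^i},y_2^{2^i}\bigr),
\end{equation*}
where $\tilde h$ denotes the polynomial obtained from $h$ by replacing each coefficient $c_\alpha$ by $c_\alpha^{2^i}$. (When $h$ has coefficients in $\bb F_2$, one has $\tilde h = h$ and the rest is simpler.)

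Next, I would observe that the map $\sigma:\bb F_q^2 \to \bb F_q^2$ defined by $\sigma(u_1,u_2) = (u_1^{2^i},u_2^{2^i})$ is a bijection, since $a \mapsto a^{2^i}$ is a field automorphism of $\bb F_q$. Applying $\sigma$ to the row indices $(x_1,x_2)$ and to the column indices $(y_1,y_2)$ shows that the matrix $\bigl(h(x_1,x_2,y_1,y_2)^{2^i}\bigr)_{(x_1,x_2),(y_1,y_2)\in\bb F_q^2}$ is obtained from the matrix $\bigl(\tilde h(u_1,u_2,v_1,v_2)\bigr)_{(u_1,u_2),(v_1,v_2)\in\bb F_q^2}$ by a simultaneous permutation of its rows and columns. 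Such permutations preserve rank, so $\rank{\bigl(h^{2^i}\bigr)} = \rank{(\tilde h)}$.

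Finally, to compare $(\tilde h)$ with $(h)$, I would note that $\tilde h$ is obtained from $h$ by applying the Frobenius automorphism $\phi:a\mapsto a^2$ of $\bb F_q$ a total of $i$ times to each coefficient; equivalently, the matrix $(\tilde h(u,v))$ is the entrywise image of $(h(u,v))$ under $\phi^i$. Since $\phi^i$ is a field automorphism, applying it entrywise preserves the rank of any matrix over $\bb F_q$ (and, equally, the $\bb F_2$-rank, since $\phi^i$ is $\bb F_2$-linear and bijective). Combining the two rank equalities gives $\rank{(h^{2^i})} = \rank{(\tilde h)} = \rank{(h)}$, which is the claim. The only mild subtlety, and thus the main thing to get right, is tracking the coefficient side of the Frobenius; once that bookkeeping is in place the argument is just ``permute indices, then relabel entries by an automorphism.''
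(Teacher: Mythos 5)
Your proof is correct and follows essentially the same route as the paper's: write $h^{2^i}$ as $h$ (or its coefficient-Frobenius twist $\tilde h$) evaluated at $(x_1^{2^i},x_2^{2^i},y_1^{2^i},y_2^{2^i})$, and observe that this amounts to a simultaneous permutation of the rows and columns. Your extra care with coefficients lying in $\bb F_q\setminus\F$ (the entrywise Frobenius step) is bookkeeping the paper omits by tacitly taking the coefficients to be Frobenius-fixed, but the core idea is identical.
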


\begin{proof}
    Note that $h(x_1,x_2,y_1,y_2)^{2^i}=h(x_1^{2^i},x_2^{2^i},y_1^{2^i},y_2^{2^i})$. Furthermore, this expression represents a permutation of both the rows and columns. Thus the result follows.
\end{proof}

\begin{proposition}
\label{inequality_rank_polynomial}
    Let $d(x_1,x_2,y_1,y_2)\in \mathbb{F}_q[x_1,x_2,y_1,y_2]$ and $t$ a positive integer. Then for any integer $m>t$, we have 
    $$\mathrm{rank}((d^{2^m-1}))\leq c\cdot \left(\mathrm{rank}((d^{2^t-1}))\right)^{ \frac{m}{t}},$$
    where $c=\max\{\mathrm{rank}((d^{2^i-1}))\mid 0\leq i<t\}$ only depends on $t$.
\end{proposition}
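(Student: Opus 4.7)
\noindent\textit{Proof proposal.} The plan is to exploit the binary identity
\[2^m - 1 = (2^t - 1)\cdot 2^{m-t} + (2^{m-t} - 1),\]
which, in characteristic $2$, gives the multiplicative decomposition
\[d^{2^m-1} = \bigl(d^{2^t-1}\bigr)^{2^{m-t}}\cdot d^{2^{m-t}-1}.\]
Reading this pointwise on $\mathbb{F}_q^2 \times \mathbb{F}_q^2$, the matrix $(d^{2^m-1})$ is the Hadamard product of $\bigl((d^{2^t-1})^{2^{m-t}}\bigr)$ and $(d^{2^{m-t}-1})$. Applying Corollary \ref{Hadamard_product}, and then the preceding lemma to the first factor to remove the Frobenius power $2^{m-t}$, yields the one-step bound
\[\mathrm{rank}((d^{2^m-1})) \le \mathrm{rank}((d^{2^t-1}))\cdot \mathrm{rank}((d^{2^{m-t}-1})).\]

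Next I would iterate this inequality on the second factor. Writing $m = kt + r$ with $0 \le r < t$, so $k \ge 1$ since $m > t$, after $k$ applications one reaches
\[\mathrm{rank}((d^{2^m-1})) \le \mathrm{rank}((d^{2^t-1}))^k \cdot \mathrm{rank}((d^{2^r-1})).\]
Since $0 \le r < t$, the last factor is at most $c$ by definition. In the nontrivial case where $(d^{2^t-1})$ is nonzero its rank is at least $1$, so $k \le m/t$ gives $\mathrm{rank}((d^{2^t-1}))^k \le \mathrm{rank}((d^{2^t-1}))^{m/t}$; the degenerate case $(d^{2^t-1}) = 0$ forces $d \equiv 0$ on $\mathbb{F}_q^2 \times \mathbb{F}_q^2$, making both sides zero. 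Combining these estimates yields the desired bound.

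The argument is short, and I do not anticipate a real obstacle. The key conceptual input is the exponent decomposition, which converts the problem into a Hadamard product of two matrices whose ranks the preceding results already control: one is a Frobenius image of $(d^{2^t-1})$, the other carries a strictly smaller exponent of $d$. The only technicalities to watch are tracking the remainder $r$ (for which the constant $c$ is precisely designed) and the monotonicity $x^k \le x^{m/t}$, which needs $x \ge 1$ and is therefore automatic outside the trivial case.
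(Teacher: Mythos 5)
Your proposal is correct and follows essentially the same route as the paper: both decompose the exponent $2^m-1$ into binary-disjoint pieces $(2^t-1)2^{jt+r}$ plus a remainder $2^r-1$, turn $d^{2^m-1}$ into a Hadamard product of the corresponding evaluation matrices, and then apply Corollary \ref{Hadamard_product} together with the Frobenius-power rank invariance; the only difference is that you peel off one factor at a time recursively while the paper writes the full $(k+1)$-fold factorization at once. Your explicit handling of the degenerate case $\mathrm{rank}((d^{2^t-1}))=0$ is a small point the paper leaves implicit.
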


\begin{proof}
    Suppose that $m=k\cdot t+r$, where $0\leq r< t$. Then we can write 
    \begin{align*}
    2^m-1=\sum_{j=0}^{k-1}{(2^t-1)2^{j+r}}+2^r-1.    
    \end{align*} 
    Hence,
    \begin{align*}
        &\mathrm{rank}((d^{2^m-1}))\\
        \leq& \mathrm{rank}((d^{2^r-1}))\prod_{j=0}^{k-1}\mathrm{rank}((d^{(2^t-1)\times2^{j+r}})) \\      \leq&c\cdot\prod_{j=0}^{k-1}\mathrm{rank}((d^{2^t-1}))=c\cdot \left(\mathrm{rank}((d^{2^t-1}))\right)^k \\
        \leq& c\cdot \left(\mathrm{rank}((d^{2^t-1}))\right)^{ \frac{m}{t}},
    \end{align*}
    where $c=\max\{\mathrm{rank}((d^{2^i-1}))\mid 0\leq i<t\}$.
\end{proof}

\begin{remark}
     The above theorem tells us that the rank of $A_m=(d^{2^t-1})_{\bb F_{2^m}^2 \times \bb F_{2^m}^2}$ will give an upper bound for the rank of $(d^{2^m-1})$. However, the matrix $A_m$ is changing as $m$ increases. We next show that $\mathrm{rank}(A_m)$ would not change when $m$ is sufficiently large.
\end{remark}

\begin{definition}[Rank of  a polynomial]
    Assume that a polynomial $h\in \F[x_1,x_2,y_1,y_2]$, say $$h=\sum_{i_1,i_2,j_1,j_2}a_{i_1,i_2,j_1,j_2}x_1^{i_1}x_2^{i_2}y_1^{j_1}y_2^{j_2},$$ 
    where $a_{i_1,i_2,j_1,j_2}\in \F$. Then the coefficient matrix of $h$, the rows indexed by $(i_1,i_2)$ and the columns indexed by $(j_1,j_2)$, is $M_h=(a_{i_1,i_2,j_1,j_2})$. The {\it rank of the polynomial $h$} is the rank of its coefficient matrix $\mathrm{rank}(M_h)$, and it will be denoted by $\mathrm{rank}(h)$.
\end{definition}

\begin{lemma}
\label{equaility_rank_polynomial}
    Let $h\in \F[x_1,x_2,y_1,y_2]$. Assume that $d=\max\{\deg_{x_1}h,\deg_{x_2}h,\deg_{y_1}h,\deg_{y_2}h)\}$, where $\deg_{x_1}h$ is the degree of $h$ in variable $x_1$. If $q>d$, then $$\mathrm{rank}((h)_{\bb F_q^2\times \bb F_q^2})=\mathrm{rank}(h).$$ 
\end{lemma}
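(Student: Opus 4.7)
The plan is to exhibit the evaluation matrix $(h)_{\bb F_q^2\times \bb F_q^2}$ as a product $V M_h W^{T}$ where $V, W$ are Vandermonde-type evaluation matrices of full column rank, so that the rank of the evaluation matrix coincides with the rank of $M_h$.

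First I would write $h=\sum_{(i_1,i_2),(j_1,j_2)} a_{i_1,i_2,j_1,j_2} x_1^{i_1}x_2^{i_2}y_1^{j_1}y_2^{j_2}$ with all exponents running over $\{0,1,\ldots,d\}$. Define the $q^2\times (d+1)^2$ matrix $V$ by $V_{(x_1,x_2),(i_1,i_2)}=x_1^{i_1}x_2^{i_2}$ and the $q^2\times (d+1)^2$ matrix $W$ by $W_{(y_1,y_2),(j_1,j_2)}=y_1^{j_1}y_2^{j_2}$. A direct expansion gives
\[
(VM_hW^T)_{(x_1,x_2),(y_1,y_2)}=\sum a_{i_1,i_2,j_1,j_2}x_1^{i_1}x_2^{i_2}y_1^{j_1}y_2^{j_2}=h(x_1,x_2,y_1,y_2),
\]
so $(h)_{\bb F_q^2\times \bb F_q^2}=VM_hW^T$.

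Next I would observe that $V=V_1\otimes V_1$, where $V_1$ is the $q\times (d+1)$ matrix $(x^i)_{x\in \bb F_q,\,0\le i\le d}$. Since $d+1\le q$, any $(d+1)$ distinct rows of $V_1$ form a square Vandermonde matrix with nonzero determinant, so $V_1$ has full column rank $d+1$; by Theorem \ref{tensor_product} (or directly), $V$ has full column rank $(d+1)^2$. The same reasoning shows $W$ has full column rank $(d+1)^2$, equivalently $W^T$ has full row rank.

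Finally, I would invoke the standard fact that multiplying by a matrix of full column rank on the left (or full row rank on the right) preserves rank: since $V$ is injective as a linear map, $\mathrm{rank}(VM_h)=\mathrm{rank}(M_h)$, and since $W^T$ is surjective, $\mathrm{rank}(M_hW^T)=\mathrm{rank}(M_h)$. Combining yields
\[
\mathrm{rank}\bigl((h)_{\bb F_q^2\times\bb F_q^2}\bigr)=\mathrm{rank}(VM_hW^T)=\mathrm{rank}(M_h)=\mathrm{rank}(h),
\]
as claimed. There is no real obstacle here; the only point that genuinely uses the hypothesis $q>d$ is the Vandermonde rank computation for $V_1$, so I would make sure to isolate that step cleanly.
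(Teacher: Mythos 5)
Your proof is correct and follows essentially the same route as the paper: both factor the evaluation matrix as a Vandermonde-type matrix times the coefficient matrix $M_h$ times another Vandermonde-type matrix, and conclude by noting the outer factors have full rank. The only cosmetic difference is that you truncate the exponent range to $\{0,\dots,d\}$ so that $V,W$ are rectangular of full column rank, whereas the paper lets the exponents run up to $q-1$ so that its factors $L,R$ are square and invertible; both arguments use the hypothesis $q>d$ in the same way.
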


\begin{proof}
    Suppose that $$h=\sum_{i_1,i_2,j_1,j_2=0}^{q-1}a_{i_1,i_2,j_1,j_2}x_1^{i_1}x_2^{i_2}y_1^{j_1}y_2^{j_2}.$$ Then we have
    \begin{align*}
        &(h)_{\bb F_q^2\times \bb F_q^2}=LM_hR\\
        =&\left[\begin{array}{ccc}
        \cdots&x_1^{i_1}x_2^{i_2}&\cdots   
        \end{array}\right]
        \left[\begin{array}{ccc}
             & \vdots& \\
             \cdots&a_{i_1,i_2,j_1,j_2}&\cdots \\
             &\vdots&
        \end{array}\right]
        \left[\begin{array}{c}
         \vdots  \\
          y_1^{j_1}y_2^{j_2} \\
          \vdots
    \end{array}\right],
    \end{align*}
    where the rows of $ L$ and the columns of $ R$ are indexed by elements of $\bb F_q^2$. Note that the matrices $L,R$ are invertible, so $\mathrm{rank}((h))=\mathrm{rank}(M_h)=\mathrm{rank}(h)$.
\end{proof}

We now consider the generalized BCH family $F_n$. In the following, we denote $d=(x_1+y_1)^n+x_2+y_2,f=d^{q-1}$.

\begin{theorem}
    If there exists a positive integer $t$ such that
    \[\mathrm{rank}(d^{2^t-1})< 4^t,\]
    then the generalized BCH family $F_n$ is of unit rate.
\end{theorem}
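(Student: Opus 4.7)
The plan is to chain together the Hadamard-product rank inequality of Proposition~\ref{inequality_rank_polynomial} with the polynomial--matrix stabilization of Lemma~\ref{equaility_rank_polynomial}. First, I would note that the hypothesis $\mathrm{rank}(d^{2^t-1})<4^t$ is a statement about the rank of a \emph{polynomial}, whereas Proposition~\ref{inequality_rank_polynomial} is phrased in terms of evaluated matrix ranks $\mathrm{rank}((d^{2^i-1})_{\mathbb{F}_q^2\times \mathbb{F}_q^2})$ for $0\leq i\leq t$. Because the variable-degree of $d^{2^t-1}=((x_1+y_1)^n+x_2+y_2)^{2^t-1}$ is at most $n(2^t-1)$, Lemma~\ref{equaility_rank_polynomial} identifies these two notions of rank as soon as $q=2^m>n(2^t-1)$. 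Hence, for all sufficiently large $m$, the constant $c=\max\{\mathrm{rank}((d^{2^i-1})) \mid 0\leq i<t\}$ appearing in Proposition~\ref{inequality_rank_polynomial} equals the fixed value $c_0:=\max\{\mathrm{rank}(d^{2^i-1})\mid 0\leq i<t\}$, which depends only on $n$ and $t$.

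Next, with this observation in hand, I would apply Proposition~\ref{inequality_rank_polynomial} to obtain, for all sufficiently large $m$,
\[
\mathrm{rank}(D_m)=\mathrm{rank}((d^{2^m-1}))\;\leq\;c_0\cdot\bigl(\mathrm{rank}(d^{2^t-1})\bigr)^{m/t}\;=\;c_0\cdot\alpha^{m/t},
\]
where $\alpha:=\mathrm{rank}(d^{2^t-1})<4^t$ by hypothesis. Dividing by the ambient dimension $4^m=q^2$ gives
\[
R(D_m)\;\leq\;c_0\cdot\left(\frac{\alpha}{4^t}\right)^{m/t}.
\]
Since the base $\alpha/4^t$ is a fixed constant strictly less than $1$, the right-hand side tends to $0$ as $m\to\infty$; hence $R(D_m)\to 0$.

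Finally, I would invoke the standard passage between $D_m$ and the parity-check matrix $H_m$ used throughout the paper: a suitable row-and-column permutation (the analogue for exponent $n$ of the proposition in Section~\ref{Upperbound_section}) identifies $W_m$ with $D_m$, while $H_m=W_m+J$ with $\mathrm{rank}(J)=1$ yields $|\mathrm{rank}(H_m)-\mathrm{rank}(W_m)|\leq 1$. Therefore $\mathrm{rank}(H_m)/4^m\to 0$, which is equivalent to $R(C_m)=1-\mathrm{rank}(H_m)/4^m\to 1$; that is, the generalized BCH family $F_n$ is of unit rate. The main point of care throughout the argument is ensuring that the constant $c$ in the Hadamard-product recursion is a genuine $m$-independent constant, rather than an evaluated matrix rank that might a priori grow with $m$; this is exactly the role of Lemma~\ref{equaility_rank_polynomial}, and once it is invoked the rest is a routine chain of inequalities.
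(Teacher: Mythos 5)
Your proof is correct and follows essentially the same route as the paper: apply Proposition~\ref{inequality_rank_polynomial} for $2^m>n(2^t-1)$, use Lemma~\ref{equaility_rank_polynomial} to replace the evaluated matrix ranks by the $m$-independent polynomial ranks, and conclude $R(D_m)\to 0$, hence unit rate via the usual $W_m$, $J$ passage. If anything, you are slightly more careful than the paper in justifying that the constant $c$ stabilizes for large $m$, but this is a refinement of detail, not a different argument.
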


\begin{proof}
    When $m$ satisfies $2^m>n(2^t-1)$, by Proposition $\ref{inequality_rank_polynomial}$ and Lemma $\ref{equaility_rank_polynomial}$ we have 
    \begin{align*}
        \mathrm{rank}((f)_{\bb F_{2^m}^2\times \bb F_{2^m}^2})&=
        \mathrm{rank}((d^{2^m-1})_{\bb F_{2^m}^2\times \bb F_{2^m}^2}) \\
        &\leq c\cdot \left(\mathrm{rank}((d^{2^t-1})_{\bb F_{2^m}^2\times \bb F_{2^m}^2})\right)^\frac{m}{t} \\
        &= c\cdot \left(\mathrm{rank}(d^{2^t-1})\right)^\frac{m}{t} \\
        &\leq c\cdot (4^t-1)^\frac{m}{t}\\
        \frac{\mathrm{rank}((f))}{4^m}&\leq \frac{c\cdot (4^t-1)^\frac{m}{t}}{4^m}
        =c\cdot \left(\frac{4^t-1}{4^t}\right)^\frac{m}{t}.
    \end{align*}
    Hence the evaluation matrix $(f)_{\bb F_q^2\times \bb F_q^2}$ is of low rank and thus the parity-check matrix of the generalized BCH family $F_n$ is also of low rank. We are done.
\end{proof}
    
We can use a computer to search for the smallest $t$ such that the rank of the polynomial $d^{2^t-1}$ is strictly smaller than $4^t$. For instance, using Magma, we know that $\mathrm{rank}(d^{2^6-1})=3256<4096=4^6$ for $F_7$, $\mathrm{rank}(d^{2^7-1})=15018<16384=4^7$ for $F_{11}$, and $\mathrm{rank}(d^{2^7-1})=14442<16384=4^7$ for $F_{13}$. Therefore, we obtain the following result.

\begin{corollary}
The generalized BCH families $F_7,F_{11}$ and $F_{13}$ are all of unit rate. $\hfill\square$
\end{corollary}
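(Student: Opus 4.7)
The plan is to invoke the theorem immediately preceding this corollary, which reduces unit rate for $F_n$ to finding a positive integer $t$ with $\mathrm{rank}(d^{2^t-1}) < 4^t$, where $d = (x_1+y_1)^n + x_2 + y_2 \in \F[x_1,x_2,y_1,y_2]$. So for each $n \in \{7, 11, 13\}$ I would exhibit an explicit $t$ and verify the rank inequality by a direct computer algebra computation, say in Magma.

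The computational step is to build the $\F$-coefficient matrix $M$ of $d^{2^t-1}$, with rows indexed by the $(i_1,i_2)$-exponent pair of $x_1^{i_1}x_2^{i_2}$ and columns by the $(j_1,j_2)$-exponent pair of $y_1^{j_1}y_2^{j_2}$, and then to compute $\mathrm{rank}(M)$. To keep this tractable I would expand $d^{2^t-1}$ via the Frobenius factorization $d^{2^t-1} = \prod_{i=0}^{t-1} d^{2^i}$ (valid over $\F$), where each factor is just $d$ with exponents doubled. Lucas' theorem then tells us that the only monomials contributing to the expansion come from no-carry decompositions of the bits of $2^t-1$ among the three summands of $d$; concretely,
\[
d^{2^t-1} = \sum_{(a,b,c)} (x_1+y_1)^{an}\, x_2^{b}\, y_2^{c},
\]
where the sum ranges over the $3^t$ triples $(a,b,c)$ with $a+b+c = 2^t-1$ and no binary carries. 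This makes the support of $M$ quite sparse and gives a direct enumeration of its nonzero entries.

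With this setup, Magma confirms that for $n=7$ taking $t=6$ gives $\mathrm{rank}(d^{63}) = 3256 < 4096 = 4^6$; for $n=11$ taking $t=7$ gives $\mathrm{rank}(d^{127}) = 15018 < 16384 = 4^7$; and for $n=13$ taking $t=7$ gives $\mathrm{rank}(d^{127}) = 14442 < 16384$. Applying the preceding theorem in each case yields unit rate for $F_7$, $F_{11}$, and $F_{13}$, as claimed.

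The main obstacle is the cost of the rank computation at $t=7$: the row and column index sets have size up to about $n(2^t-1) \cdot (2^t-1) \approx 2 \times 10^5$ each, so a dense rank computation over $\F$ is prohibitive. The practical way out is to enumerate only the nonzero entries of $M$ directly from the structured expansion above, compress the matrix to its nonempty rows and columns, and then call Magma's sparse $\F$-rank routines. Once the three rank inequalities are verified by this computation, the corollary follows immediately from the preceding theorem.
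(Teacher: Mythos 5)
Your proposal is correct and follows essentially the same route as the paper: the corollary is obtained by applying the immediately preceding theorem to the Magma-computed ranks $\mathrm{rank}(d^{63})=3256<4^6$ for $F_7$, $\mathrm{rank}(d^{127})=15018<4^7$ for $F_{11}$, and $\mathrm{rank}(d^{127})=14442<4^7$ for $F_{13}$, exactly as the paper does. The extra implementation details you give (the no-carry multinomial expansion via Lucas' theorem and the sparse-rank computation) are sensible but not part of the paper's argument, which simply reports the computed values.
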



\section*{Acknowledgments}

We would like to thank Sihuang Hu for bringing the open problem of Barg and Z\'emor to our attention. We also thank Zhen Jia for his help with computer programming. 


\begin{thebibliography}{1}



\bibitem{bargmsch}
A.~Barg, M.~Schwartz, and L.~Yohananov, ``Storage codes on triangle-free graphs
with asyptotically unit rate,'' \emph{arXiv:2212.12117v1.}



\bibitem{barg2022high}
A.~Barg and G.~Z\'emor, ``High-rate storage codes on triangle-free graphs,'' \emph{IEEE Trans. Inf.
Theory,} vol. 68, no. 12, pp. 7787–7797, Dec. 2022.



\bibitem{cameron2014guessing}
P.~Cameron, A.~Dang, and S.~Riis, ``Guessing games on triangle-free graphs,'' \emph{Electron. J. Combinatorics,} vol. 23, no. 1, p. P1.48, Mar. 2016.



\bibitem{MR938818}
C. Chao and M. Zhang, ``On Multinomial Coefficients Modulo a Prime,'' \emph{European Journal of Combinatorics}, vol. 9, no. 1, pp. 23-26, Jan. 1988.



\bibitem{chris2011guessing}
D.~Christofides and K.~Markstr\"om, ``The guessing number of undirected graphs,'' \emph{Electron. J. Combinatorics,} vol. 18, no. 1, p. P192, Sep, 2011.



\bibitem{Horn1985MatrixA}
R.~A. Horn and C.~R. Johnson, \emph{Matrix analysis.} In Statistical Inference for Engineers
and Data Scientists, 1985.



\bibitem{xiang2023unitrate}
H.~Huang and Q.~Xiang, ``Construction of storage codes of rates approaching one on
triangle-free graphs,'' \emph{arXiv:2301.01668v2.}



\bibitem{ling_xing_2004}
S.~Ling and C.~Xing, \emph{Coding Theory: A First Course.} Cambridge University Press,
2004.



\bibitem{Mazumdar2014StorageCO}
A.~Mazumdar, ``Storage capacity of repairable networks,'' \emph{IEEE Trans. Inf.
Theory,} vol. 61, no. 11, pp. 5810–5821, Nov. 2015.



\bibitem{Mazumdar2017StorageCA}
A.~Mazumdar, A.~Mcgregor, and S.~Vorotnikova, ``Storage capacity as an
information-theoretic analogue of vertex cover,'' \emph{2017 IEEE Int. Symp. Inf. Theory (ISIT),} pp. 2810–2814, Jun. 2017.



\bibitem{Shanmugam2014BoundingMU}
K.~Shanmugam and A.~G. Dimakis, ``Bounding multiple unicasts through index coding and locally repairable codes,'' \emph{2014 IEEE Int. Symp. Inf. Theory,} pp. 296–300, Jun. 2014.






 







\end{thebibliography}
\end{document}